\documentclass[11pt]{article}

\usepackage{fullpage}
\usepackage{amsmath,algorithmic,algorithm,graphicx}
\usepackage{amssymb}
\usepackage{amsthm}
\usepackage{graphics}
\usepackage{color}
\usepackage{comment}

\usepackage{latexsym}
\usepackage{mathrsfs}
\usepackage{verbatim}

\ifx\pdftexversion\undefined
\usepackage[colorlinks,linkcolor=black,filecolor=black,citecolor=black,urlco
lor=black,pdfstartview=FitH]{hyperref}
\else
\usepackage[colorlinks,linkcolor=blue,filecolor=blue,citecolor=blue,urlcolor
=blue,pdfstartview=FitH]{hyperref}
\fi

\newcommand{\squishlist}{
 \begin{list}{$\bullet$}
  { \setlength{\itemsep}{0pt}
     \setlength{\parsep}{3pt}
     \setlength{\topsep}{3pt}
     \setlength{\partopsep}{0pt}
     \setlength{\leftmargin}{1.5em}
     \setlength{\labelwidth}{1em}
     \setlength{\labelsep}{0.5em} } }

\newcommand{\alert}[1]{\textbf{\color{red}
[[[#1]]]}\marginpar{\textbf{\color{red}**}}\typeout{ALERT:
\the\inputlineno: #1}}

\newcommand{\squishend}{
  \end{list}  }

\newtheorem{theorem}{Theorem}
\newtheorem{lemma}[theorem]{Lemma}
\newtheorem{definition}[theorem]{Definition}
\newtheorem{corollary}[theorem]{Corollary}
\newtheorem{proposition}[theorem]{Proposition}
\newtheorem{invariant}{Invariant}

\newtheorem{claim}[theorem]{Claim}

\newcommand{\E}{\mathbb{E}}
\newcommand{\sse}{\subseteq}

\newcommand{\diam}{{\rm diam}}

\newcommand{\N}[0]{{\ensuremath{\mathbb{N}}}}

\newcommand{\threat}{{\ensuremath{\mathcal{T}}}}
\newcommand{\threate}{{\ensuremath{\mathcal{J}}}}
\newcommand{\cE}{{\ensuremath{\mathcal{E}}}}
\newcommand{\cC}{{\ensuremath{\mathcal{C}}}}
\newcommand{\cR}{{\ensuremath{\mathcal{R}}}}
\newcommand{\cF}{{\ensuremath{\mathcal{F}}}}
\newcommand{\cFbar}{{\ensuremath{\overline{\mathcal{F}}}}}
\renewcommand{\S}{{\ensuremath{\mathcal{S}}}}
\newcommand{\SC}{{\ensuremath{\mathcal{S}_{|C}}}}

\newcommand{\cone}{\textrm{cone}}

\newcommand{\x}{\mathbf{x}}
\newcommand{\Texp}{\operatorname{Texp}}

\newcommand{\namedref}[2]{\hyperref[#2]{#1~\ref*{#2}}}
\newcommand{\sectionref}[1]{\namedref{Section}{#1}}

\newcommand{\theoremref}[1]{\namedref{Theorem}{#1}}
\newcommand{\propref}[1]{\namedref{Proposition}{#1}}
\newcommand{\corollaryref}[1]{\namedref{Corollary}{#1}}
\newcommand{\invref}[1]{\namedref{Invariant}{#1}}
\newcommand{\defref}[1]{\namedref{Definition}{#1}}

\newcommand{\claimref}[1]{\namedref{Claim}{#1}}
\newcommand{\lemmaref}[1]{\namedref{Lemma}{#1}}

\usepackage{pdfsync,color}

\newcommand{\agnote}[1]{}

\title{Cops, Robbers, and Threatening Skeletons: \\Padded Decomposition for Minor-Free Graphs\footnote{A preliminary version \cite{AGGNT14} of this paper appeared in STOC'14.}}
\author{
Ittai Abraham\thanks{VMWare. Email: \texttt{iabraham@vmware.com}.}
\and
Cyril Gavoille\thanks{LaBRI - University of Bordeaux, Bordeaux, France. Email: \texttt{gavoille@labri.fr}.}
\and
Anupam Gupta\thanks{Computer Science Department, Carnegie Mellon
  University, Pittsburgh, PA. Email: \texttt{anupamg@cs.cmu.edu}. Supported in part by
    NSF awards CCF-1016799 and CCF-1319811, and grant from the
    CMU-Microsoft Center for Computational Thinking.}
\and
Ofer Neiman\thanks{Department of Computer Science, Ben-Gurion University of the Negev, Beer-Sheva, Israel. Email: neimano@cs.bgu.ac.il. Supported in part by ISF grant No. (523/12) and BSF grant No. 2015813.}
\and
Kunal Talwar\thanks{Google. Email: \texttt{kunal@kunaltalwar.org}.}
}

\begin{document}

\maketitle

\begin{abstract}

We prove that any graph excluding $K_r$ as a minor can be partitioned into clusters of diameter at most $\Delta$ while removing at most $O(r/\Delta)$ fraction of the edges. This improves over the results of Fakcharoenphol and Talwar, who building on the work of Klein, Plotkin and Rao gave a partitioning that required to remove $O(r^2/\Delta)$  fraction of the edges.

Our result is obtained by a new approach that relates the topological
properties (excluding a minor) of a graph to its geometric properties
(the induced shortest path metric). Specifically, we show that
techniques used by Andreae in his investigation of the cops and
robbers game on graphs excluding a fixed minor, can be used to
construct padded decompositions of the metrics induced by such
graphs. In particular, we get probabilistic partitions with padding
parameter $O(r)$ and strong-diameter partitions with padding parameter
$O(r^2)$ for $K_r$-minor-free graphs, $O(k)$ for treewidth-$k$ graphs,
and $O(\log g)$ for graphs with (Euler) genus $g$.
\end{abstract}

\section{Introduction}

This paper considers the problem of constructing random partitioning
schemes for minor-free graphs. Loosely speaking, the goal is to find a
partition of the graph vertices so that each part (called a cluster)
has small diameter, and the probability of any local neighborhood
being cut (and not being contained within just one cluster) is
small. There is a natural tradeoff between these two parameters (the
diameter, and the probability of being cut).  Such random partitions
have found numerous applications in algorithm design, including:
flow/cut gaps, metric embeddings, and recently as core primitives for
several near linear time algorithms. Therefore improving the
parameters of the partitions is a research program of considerable
interest.

Tight parameters for such partitions are known in several settings. However, for the case of graphs that exclude some given graph $H$ as a minor, the problem of finding the optimal tradeoff remains open. Progress was made in the seminal work of Klein, Plotkin and Rao \cite{KPR93}, and improved by Fakcharoenphol and Talwar \cite{FT03}.
Despite attracting the attention of several researchers (see, e.g., \cite{JRL-blog}), the KPR framework remained
the only known approach to this problem for over 20 years.

In this paper we make progress on this question and improve known parameters.
Equally importantly, we also introduce techniques and structural insights that we
hope will be useful for further improvements on this and related
problems. In particular, we observe that the result of Andreae~\cite{Andreae86} can be reinterpreted as a \emph{structure theorem} for graphs excluding a fixed minor.
It constructively gives a {\em cop-decomposition} of a graph, which is a lot like a tree-decomposition except that instead of having $r$ vertices per bag, it guarantees having $r$ shortest-like paths in each bag.
The cop-decomposition gives weaker structure than the beautiful work of Robertson and Seymour \cite{RS03}, but has the benefit of significantly better dependence on $r$.
We extend this cop-decomposition framework to produce probabilistic partitions, and we believe that this high level approach may be useful in getting better algorithms for other problems involving excluded minor graphs.

We begin with some notation. For an undirected weighted graph $G=(V,E)$ and a
subset $C\subseteq V$, denote by $G[C]$ the induced subgraph on $C$. Let
$d_G$ denote the shortest path metric on $G$, and for $v\in V$ and $t\ge 0$ define the ball $B_G(v,t)=\{u\in V \mid d_G(v,u)\le t\}$. The \emph{(weak) diameter}
of a set $S \sse V$ is $\max_{x,y \in S} d_G(x,y)$, whereas the
\emph{strong-diameter} of the set $S$ is $\max_{x,y \in S}
d_{G[S]}(x,y)$ --- note that the latter distance is being measured in
the induced subgraph.

\begin{definition}[$\Delta$-bounded partitions]
  A partition $P=\{C_1,\dots,C_t\}$ of $V$ is \emph{$\Delta$-bounded} if for
  all $i$, the weak-diameter $\diam(C_i)\le \Delta$. The partition $P$ is {\em strong-diameter $\Delta$-bounded} if the strong-diameter $\diam(G[C_i])\le
  \Delta$ for all $i$.
\end{definition}

Given
a partition $P=\{C_1,\dots,C_t\}$ of $V$, let $P(z)$ denote the unique cluster
containing $z\in V$.

\begin{definition}
     A distribution $\mathcal{P}$ over $\Delta$-bounded partitions is \emph{$(\beta,\delta)$-padded} if for any $z\in
  V$ and any $0\le\gamma\le \delta $,
  \[
  \Pr[B_G(z, \gamma \Delta)\subseteq P(z)]\ge 2^{-\beta \gamma }~.
  \]
  We call $\mathcal{P}$ \emph{$\beta$-padded} if it is
  $(\beta,\delta)$-padded where $\delta$ is a universal constant that does not depend on $\beta$, and {\em efficient} if it can be sampled in polynomial time.
\end{definition}

Our definition of padded partitions is similar to the one in\cite{ABN11}, which generalizes several definitions that appeared before
, e.g. \cite{KPR93,GKL03,AGMW10}. In particular, our definition refers to cutting balls, and not only edges, and also allows for $\gamma>1/\beta$.

Our main result is the following.

\begin{theorem}\label{thm:main-weak}
  Every $K_r$-minor-free graph $G$ admits an efficient $O(r)$-padded partition scheme.
\end{theorem}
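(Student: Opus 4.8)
The plan is to combine two ingredients: Andreae's \emph{cop decomposition}, read as a structure theorem, and a single-scale randomized region-growing step (as in the Calinescu--Karloff--Rabani / Bartal decompositions) performed at \emph{every} node of that decomposition. First I would record the cop decomposition of the ($K_r$-minor-free, and w.l.o.g.\ connected) graph $G$ as a rooted tree $\mathcal{T}$ of depth $O(r)$ together with, for each node $t\in\mathcal{T}$, a \emph{territory} $V_t\sse V$ and a \emph{cop path} $P_t$, so that: (i) the root has territory $V$; (ii) $P_t$ is a shortest path in $G[V_t]$, measured after contracting the fragments of ancestor cop paths adjacent to $V_t$ (the ``threatening skeleton''), so that $P_t$ is isometric to an interval; (iii) the children of $t$ have pairwise disjoint territories, namely the vertex sets of the connected components of $G[V_t\setminus P_t]$; and (iv) every $v\in V$ lies on $P_t$ for the deepest node $t$ with $v\in V_t$, so leaves carry trivial territories. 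Properties (ii)--(iii) are exactly what bounds the depth by $O(r)$: a root-to-leaf path of length $r$ would exhibit $r$ pairwise-adjacent branch sets (the contracted cop paths, suitably linked through the skeleton), i.e.\ a $K_r$-minor.

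Given a target scale $\Delta>0$, I would then build the random partition as follows. Sample, independently for every node $t$, an offset $a_t$ uniform in $[0,\Delta/3)$ and a radius $R_t$ uniform in $[0,\Delta/3)$. Maintain a set $\mathsf{Unclaimed}$, initially $V$, and process the nodes of $\mathcal{T}$ in order of increasing depth. At node $t$: let $C_t=\{\,v\in\mathsf{Unclaimed}\cap V_t:\ d_{G[V_t]}(v,P_t)\le R_t\,\}$; since $P_t$ is isometric to an interval, the offset $a_t$ cuts it into consecutive sub-arcs $I_t^1,I_t^2,\dots$ each of arclength at most $\Delta/3$; assign every $v\in C_t$ to the cluster labelled by the sub-arc $I_t^j$ that contains a point of $P_t$ nearest to $v$ (breaking ties arbitrarily), and delete $C_t$ from $\mathsf{Unclaimed}$. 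By (iv) every vertex is claimed by the time its leaf is reached, so this outputs a partition $P$. Each cluster is one sub-arc $I_t^j$ together with vertices whose $d_{G[V_t]}$-nearest point of $P_t$ lies on $I_t^j$ and is within distance $\le\Delta/3$; hence any two vertices of a cluster are at $d_{G[V_t]}$-distance, and therefore $d_G$-distance, at most $\Delta/3+\Delta/3+\Delta/3=\Delta$, so $P$ is $\Delta$-bounded. Both the cop decomposition (constructive, by Andreae's algorithm) and the carving run in polynomial time, giving efficiency.

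For padding, fix $z\in V$ and $0\le\gamma\le\delta_0$ for a universal constant $\delta_0$ to be chosen, and let $t_0,\dots,t_\ell$ ($\ell=O(r)$) be the ancestors of $z$'s leaf down to the node that claims $z$. Condition on $B(z,\gamma\Delta)$ having survived intact through levels $0,\dots,i-1$ (no ball vertex claimed yet, all still in $V_{t_i}$). Level $i$ separates the ball only if (a) some ball vertex enters $C_{t_i}$ and some does not: the projected distances $d_{G[V_{t_i}]}(\cdot,P_{t_i})$ over the ball span an interval of length $O(\gamma\Delta)$ (here one uses the threatening-skeleton invariant to argue the ball still has $d_{G[V_{t_i}]}$-diameter $O(\gamma\Delta)$), and the independent uniform radius $R_{t_i}$ lands in that interval with probability $O(\gamma)$; (b) all ball vertices enter $C_{t_i}$ but project to different sub-arcs: their projections lie in an arclength-$O(\gamma\Delta)$ sub-interval of $P_{t_i}$, which the independent uniform offset $a_{t_i}$ splits with probability $O(\gamma)$; or (c) two unclaimed ball vertices fall in different components of $G[V_{t_i}\setminus P_{t_i}]$: then any $G$-geodesic between them either stays in $V_{t_i}$ and hence meets $P_{t_i}$, bringing a ball vertex within $O(\gamma\Delta)$ of $P_{t_i}$ and folding into (a), or it leaves $V_{t_i}$ through an ancestor cop path, which (again by the skeleton invariant) means a ball vertex was within $O(\gamma\Delta)$ of that earlier cop path, a bounded-probability event at the level where it occurred. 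Thus, conditionally, the ball survives level $i$ with probability $\ge 1-O(\gamma)$, and multiplying over the $O(r)$ levels gives $\Pr[B(z,\gamma\Delta)\sse P(z)]\ge(1-O(\gamma))^{O(r)}\ge 2^{-O(r)\gamma}$ for all $\gamma$ below a universal constant, so $P$ is $O(r)$-padded.

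The step I expect to be the crux, and the main obstacle, is exactly the interface flagged in (a) and (c): the carving at node $t$ is governed by the intrinsic metric $d_{G[V_t]}$, whereas the padding statement speaks of balls in the ambient metric $d_G$, and a pair of vertices that are $d_G$-close need not be $d_{G[V_t]}$-close. One therefore has to maintain a strong enough invariant about the ``threatening skeleton'' --- the fragments of ancestor cop paths that border $V_t$ --- guaranteeing that, for a still-unclaimed vertex, its $d_G$-distance to the nearest (current or ancestral) cop path is realized inside the current territory, and that being ``pushed into a descendant territory'' can only occur after surviving a carving step that already cost $O(\gamma)$. Getting this invariant to coexist with choosing each cop path as a shortest path in the contracted graph is the heart of the argument, and it is precisely what lets one charge only $O(1)$ padding per level and thereby avoid the extra factor of $r$ that the KPR / Fakcharoenphol--Talwar scale-shrinking incurs.
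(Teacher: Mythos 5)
There is a genuine gap, and it sits exactly where your whole analysis rests: the claim that the cop decomposition is a tree of depth $O(r)$, so that only $O(r)$ carving steps can ever affect the ball $B(z,\gamma\Delta)$, after which you multiply $(1-O(\gamma))$ over the levels. The justification you give --- ``a root-to-leaf path of length $r$ would exhibit $r$ pairwise-adjacent branch sets'' --- is false. In a cop decomposition the supernodes that are pairwise adjacent (and hence form a clique minor) are only those \emph{currently adjacent to the component containing $z$}; \invref{inv:sim1} and \corollaryref{cor:ler1} bound that set by $r$ \emph{at any one time}. But a new skeleton built in $z$'s component can disconnect earlier skeletons from that component, after which they no longer participate in the clique, and the process can keep introducing fresh skeletons near $z$ indefinitely. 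Consequently the number of skeletons that come close enough to threaten $z$ over the whole run is \emph{not} deterministically $O(r)$ --- the paper explicitly notes that obtaining \emph{any} deterministic bound on the number of threateners from a cop decomposition is an open problem. With an unbounded (or even merely super-linear in $r$) number of threatening steps, your per-level $O(\gamma)$ charge with independent \emph{uniform} radii does not close: a union bound over $N$ threateners costs $O(N\gamma)$, and you need $N=O(r)$ for the claimed padding.

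This is precisely the difficulty the paper's machinery is built to overcome, and it does so differently from your proposal in two essential ways. First, the buffers around the skeletons are drawn from \emph{truncated exponential} distributions (not uniform), which forces the distance from $z$ to the surviving buffered skeletons to decrease in expectation; a potential function $\sum_j e^{-(2s+1)x_j}$ over the at-most-$s$ visible buffers, together with the optional stopping theorem, then bounds the \emph{expected} number of threateners by roughly $2^{O(r)}$ (\lemmaref{lem:threat}). Second, an expected bound of $2^{O(r)}$ threateners is far too weak for a plain union bound, so \lemmaref{lem:cut} exploits the memoryless property of the exponential radii to show that the cut probability is $(1-\delta)(1+\tau/(e^b-1))$, i.e.\ the $2^{O(r)}$ expected threateners are absorbed by the $e^{\Theta(r)}$ denominator. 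Your proposal contains neither ingredient, and without them the step ``multiplying over the $O(r)$ levels'' has no valid substitute. (A secondary issue: you take each cop path to be a shortest path in a \emph{contracted} graph, so ``arclength $\le\Delta/3$'' of a sub-arc does not bound its length in $G$, which undermines the $\Delta$-boundedness claim; the paper avoids this by taking shortest paths in the induced subgraph $G_i$ itself.)
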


It has long been known that for arbitrary graphs the best possible
padding parameter is $\Theta(\log |V|)$ \cite{Bar96}. For special
cases better bounds are known, e.g., for metrics of doubling constant
$\lambda$, the padding parameter is $\Theta(\log\lambda)$
\cite{GKL03}. For graphs that can be drawn on a orientable surface of
genus $g$, ideas developed in a recent sequence of
papers~\cite{IndykS07,BorradaileLS10,Sidiropoulos10}
have culminated in the optimal padding parameter of
$\Theta(\log g)$~\cite{LeeS10}.

The first bounds for $K_r$-minor-free graphs were due to the influential
work of Klein, Plotkin, and Rao~\cite{KPR93}, who gave $(O(r^3),1/r)$-padded partition scheme. Fakcharoenphol and Talwar \cite{FT03} improved this to an
$(O(r^2),1/r)$-padded partition scheme. In this work, we improve
the padding parameter from $O(r^2)$ to $O(r)$; moreover, we provide
padding guarantees to larger balls --- the previous guarantees give
padding only for balls of diameter $< O(\Delta/r)$, compared to
$O(\Delta)$ for our result.
The partitioning scheme in~\cite{KPR93} was motivated by bounding
the maximum-multicommodity-flow/sparsest-cut gap for $K_r$-minor-free graphs.
Subsequently, it found applications to metric embeddings~\cite{Rao99,
  Yuri03} with its natural connections to edge-cut
problems~\cite{mat-book} and also to vertex-cut
problems~\cite{FeigeHL08}, to bounding higher eigenvalues and
higher-order Cheeger inequalities for graphs~\cite{BiswalLR10,
  KelnerLPT09, LeeGT12}, to metric extension problems and
approximation algorithms~\cite{CKR01-zero,AFHKTT,LN03}, and others.
The quantitative improvements given by our results thus give improvement in all these settings.

\theoremref{thm:main-weak} above gives us a weak-diameter guarantee.
However, our techniques are versatile, and can be extended to give
strong-diameter partitions --- in particular, we obtain the following
results.

\begin{theorem}\label{thm:stong-list}
Let $G$ be an undirected weighted graph.
\begin{enumerate}
\item If $G$ is a $K_r$-minor-free graph then it admits an efficient $(O(r^2),O(1/r^2))$-padded strong-diameter partition scheme.
\item If $G$ is a tree-width $r$ graph then it admits an efficient $(O(r),O(1/r))$-padded strong-diameter partition scheme.
\item If $G$ is a Euler-genus $g$ graph then it admits an efficient $O(\log g)$-padded strong-diameter partition scheme.
\end{enumerate}
\end{theorem}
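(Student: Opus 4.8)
The plan is to prove \theoremref{thm:stong-list} by modifying the weak-diameter construction behind \theoremref{thm:main-weak} so that each cluster is connected within an appropriate induced subgraph, paying an extra factor in the padding parameter that depends on how many shortest paths the cop decomposition places in a single bag. Recall that the cop decomposition of a $K_r$-free graph places roughly $r$ shortest paths per bag (in some induced subgraph), while for treewidth-$r$ graphs one can instead use a genuine tree decomposition with $r$ vertices (equivalently, $r$ trivial ``paths'') per bag, and for genus-$g$ graphs one can cut along $O(\log g)$ shortest paths to reduce to a planar instance where a single shortest path suffices per recursion level. The three items are thus instances of one recursive scheme, differing only in the number $k$ of shortest paths handled at each level: $k = O(r)$, $k = O(r)$, and $k = O(1)$ after an initial $O(\log g)$ reduction, respectively.

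First I would set up the recursion. At each stage we have an induced subgraph $G[C]$ together with a small collection of shortest paths $Q_1,\dots,Q_k$ (shortest in $G[C]$ or in a previously-fixed supergraph) whose deletion disconnects the remaining graph; these are the ``threatening skeletons'' of the title. I would then run a randomized ``ball-growing''/region-growing step centered on each path $Q_i$ in turn: pick a random radius $R$ from a truncated geometric (or exponential) distribution with scale $\Theta(\Delta/k)$, and carve off the set of vertices within distance $R$ (measured along $G[C]$) of $Q_i$ as a new cluster, then recurse on what remains, which now has one fewer threatening path plus whatever new threatening paths appear from the recursive structure of the cop/tree/genus decomposition. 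The key point for \emph{strong} diameter is that a vertex carved off at radius $\le R$ from a path $Q_i$ that is itself a shortest path of length $O(\Delta)$ in the ambient induced subgraph lies in a cluster of strong diameter $O(\Delta)$: one routes through $Q_i$, staying inside the carved region by monotonicity of the ball-growing. The union bound over the at most $k$ (or $k$ times the recursion depth, controlled by the standard KPR-style argument) cutting events, each of which cuts a ball $B(z,\gamma\Delta)$ with probability $O(\gamma \Delta / (\Delta/k)) = O(k\gamma)$, gives padding parameter $O(k)$ with $\delta = \Theta(1/k)$ — yielding $(O(r),O(1/r))$ for treewidth (item 2), and $O(\log g)$ for genus (item 3) after composing with the initial path-cutting reduction to planar.

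The extra loss in item 1 comes from the following subtlety: in the $K_r$ cop decomposition the $r$ shortest paths in a bag are shortest in \emph{different} induced subgraphs (each cop's path is shortest in the component the previous cops have isolated), so a path that looks short in its own subgraph may be a detour in $G[C]$. To keep strong diameter bounded I would, when carving the region around $Q_i$, grow the ball inside the subgraph in which $Q_i$ is actually a shortest path, so that internal routing is legitimate; reconciling these nested subgraphs with the final partition is what forces each of the $O(r)$ levels to itself contribute an $O(r)$ factor, giving padding $O(r^2)$ and $\delta = \Theta(1/r^2)$. The main obstacle — and the part I expect to require the most care — is exactly this bookkeeping: maintaining the invariant that every cluster, once finalized, is routable within an induced subgraph of diameter $O(\Delta)$, while simultaneously keeping the number of ``dangerous'' cutting events that any fixed ball $B(z,\gamma\Delta)$ can witness bounded by $O(r)$ (resp.\ $O(r)$, $O(\log g)$) rather than by the full recursion depth. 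This is handled by the standard observation that a ball of radius $\gamma\Delta$ with $\gamma = O(1/k)$ can only be near the threatening paths of $O(1)$ recursion levels before being swallowed, so the per-ball union bound stays at $O(k)$ even though the global recursion runs deeper.
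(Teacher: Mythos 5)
The central gap is in your final step: you assert that a ball $B(z,\gamma\Delta)$ ``can only be near the threatening paths of $O(1)$ recursion levels before being swallowed, so the per-ball union bound stays at $O(k)$.'' This is precisely the claim that does not hold, and the paper explicitly remarks that obtaining \emph{any} deterministic bound on the number of threateners from a cop decomposition is an open question. The bound of $r$ (resp.\ $r^2+r$, $2r$) is only on the number of skeletons adjacent to $z$'s component \emph{at a single moment}; over the lifetime of the recursion, new skeletons keep being built inside the component containing $z$, and without further argument an unbounded number of them can come within threatening distance before $z$ is captured. The paper's whole technical core (Lemma~\ref{lem:threat}, proved in \appendixref{app:thr}) is to bound the \emph{expected} number of threateners by $\approx e^{O(s)}$ via a potential function $\sum_j e^{-(2s+1)x_j}$ over normalized distances to visible buffers, shown to be a submartingale that drifts upward by $\approx s e^{-O(s)}$ per threat, combined with the optional stopping theorem; this expected count is then fed into \lemmaref{lem:cut}, whose $1/(e^b-1)$ factor cancels the exponential. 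Your proposal contains no substitute for this argument, and the naive union bound it relies on does not close.

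Two further problems. First, your strong-diameter argument assumes each skeleton path $Q_i$ has length $O(\Delta)$, so that its buffered neighborhood is a legitimate cluster. It does not: in the cop decomposition the paths run from a chosen root to the nearest neighbor of a previously built supernode, and can be arbitrarily long. The paper therefore never uses a buffered path as a cluster; it further partitions each neighborhood $S_{ij}$ into cones centered at points of $P_{ij}$ (\texttt{Create-Cones}, \lemmaref{lem:strong}), each of strong diameter at most $\Delta$, and must separately bound the probability that these cone boundaries cut $B_z$ (at most $9$ relevant centers per path, by a net argument along the shortest path). Second, for genus you cannot ``cut along $O(\log g)$ shortest paths to reduce to planar'': reducing genus requires removing up to $g$ cycles (\lemmaref{lem:genus}). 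The $O(\log g)$ padding arises because the buffers around these $g$ cycles are drawn with rate $b=\Theta(\log g)$, giving $|\threat_z|\le g$ deterministically against a denominator $e^b-1=\operatorname{poly}(g)$ in \lemmaref{lem:cut} --- a different mechanism from the one you describe. Also note that for treewidth the paper does not reuse the weak-diameter tree construction at all; it grows balls around single vertices ordered by their height in the tree decomposition, which is what makes the clusters trivially strongly connected.
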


The first result in \theoremref{thm:stong-list} is an exponential
improvement over the strong-diameter partitions of Abraham et. al.
\cite{AGMW10}. The third result strengthens the result of Lee and Sidiropoulos
\cite{LeeS10} by providing the same asymptotic padding guarantees
while ensuring that clusters have a strong-diameter. It holds for
graphs embedded on orientable or non-orientable surfaces of Euler
characteristic bounded by $2-g$ (see more details in
Section~\ref{sec:genus}). The second and the third results assume that
the embedding of the graph (into an optimal width tree-decomposition
or optimal Euler characteristic surface embedding) is given. Note that
such embeddings when $r$ or $g$ is bounded can be determined in
polynomial time (see for instance~\cite{Reed92,BK96,Mohar99,KMR08}).

\subsection{Discussion of Techniques}

How does one prove a property for a graph that does not contain a $K_r$ minor? One approach relies on the beautiful
results of Robertson and Seymour that turn this negative property, namely not having a certain minor, into a positive
constructive one. This gives a complete structural characterization of how such graphs are built from simple building
blocks by applying simple rules to them. This structure theorem allows one to prove properties of excluded minor graphs
by structural induction on the constructive procedure. On the negative side this approach typically inherits the rather
bad dependence on $r$ from the Robertson-Seymour structure theorem \cite{RS03}. Nevertheless, this approach has been highly successful and used to
prove several results for such graphs.

The other, somewhat more mysterious approach, is to work more directly and design an algorithm establishing the
property, such that by failing it constructs a $K_r$ minor. This approach is often problem-specific but usually leads
to better dependence on $r$. Examples of this approach include the work of Andreae \cite{Andreae86} for the Cops and Robbers game,
results of Alon, Seymour and Thomas \cite{AST90} on separators, and the aforementioned work of Klein, Plotkin and Rao \cite{KPR93}.

Let us now give a high-level description of some of the ideas and
techniques used to prove \theoremref{thm:main-weak}
and \theoremref{thm:stong-list}.

\paragraph{The Bounded Threatener Program and its probabilistic extension.}
A well-studied approach to obtain $\Delta$-bounded $\beta$-padded
probabilistic partitions is to find a set of ``suitable'' centers $S$,
and iteratively build balls around the points in $S$ with radii drawn
from a \emph{truncated exponential distribution} in the range $[\Delta/4,
\Delta/2]$ with rate $\beta$. The memoryless property of the exponential
distribution ensures that balls of radius $\approx \Delta/\beta$ around any vertex $z$ avoid
being cut with constant probability, conditioned on the exponential
distribution not being truncated. To handle the truncation, we need to
bound the number of centers at distance at most $(1/2+1/\beta)\Delta$
from any vertex $z$. We will call such centers the \emph{threateners}
of $z$. If the number of threateners is bounded by $2^{O(\beta)}$ then
a trivial union bound implies that with constant probability: none of them will reach diameter
$(1/2-1/\beta)\Delta$ and hence none of them will intersect the ball $B(z,\Delta/\beta)$.

A contribution of this work is in extending the bounded
threatener program and showing how a bound on the \emph{expected}
number of threateners suffices for obtaining probabilistic partitions.

\paragraph{Cop-Decompositions.} Andreae~\cite{Andreae86} considered the
following game, a set of cops plays against a robber. At each round the
robber can move across one edge and then each one of the cops can move
across one edge. The cops win if they land on the same vertex as the
robber. A key observation: if the robber is limited to a subgraph $V' \subset V$ and $P$ is a geodesic shortest path with respect to $G[V']$
then eventually a single cop can ``patroll'' $P$ and prevent the robber from even stepping on $P$.
Using this observation, Andreae showed that if $G$ is $K_r$-minor-free then $O(r^2)$ cops have
a winning strategy. The cop strategy is simple: each cop controls one
shortest path and collectively they try to iteratively build a $K_r$ minor. The shortest
paths controlled by the cops induce a set of supernodes (disjoint
connected subsets) and edges containing a minor that is a subgraph of
$K_r$. At each round one can fix a center for a new supernode and use
free cops to connect this center to all previous supernodes via shortest
paths. The new center and each new shortest path is fully contained in
the component containing the robber that is induced by removing the
supernodes from $G$ (hence these new paths are disjoint from all previous
supernodes).

We view Andreae's result as constructing a cop-decomposition of width $O(r)$, as we shall define now. First, recall that a tree-decomposition for $G$ is a tree $T$ whose nodes, called bags, are subsets of $V$ with the following properties: (1) $\cup_{B \in V(T)} B = V$; (2) for every edge $(u,v)$ of $G$, there is a bag of $T$ containing $u$ and $v$; and (3) for every $u\in V$, the set of bags containing $u$ induces a subtree of $T$.

\begin{definition}\label{def:cop}
A \emph{cop-decomposition} of \emph{width} $k$ for graph $G$ is a rooted tree $T$ that is a tree-decomposition for $G$ satisfying the following property. For every bag $B$ of $T$, the set of vertices of $B\setminus B'$, where $B'$ is the parent\footnote{If $B$ is the root bag of $T$, then we set $B'=\emptyset$.} bag of $B$, is composed of at most $k$ shortest paths of $G\setminus B'$.
\end{definition}

Note that the core difference between the width of a cop-decomposition and the width of a tree-decomposition is that we count the number of shortest paths instead of the number of vertices in each bag. The \emph{cop-width} of $G$, denoted by $cw(G)$, is the least number $k$ such that $G$ admits a cop-decomposition of width $k$. Observe that trees have cop-width $1$. If $G$ excludes a $K_r$ as a minor, then Andreae shows that $cw(G) \le r-1$. In fact, Andreae's cop algorithm constructively creates a cop-decomposition for $G$ of width $r-1$, moreover, each bag $B$ is actually a rooted shortest-path tree with at most $r-1$ leaves and whose root is in $B\setminus B'$.

\paragraph{From Cop-Decompositions to Padded Partitions via Skeletons.}
The cop-decomposition induces a partition of the vertices of the graph
into bags that consist of at most $r-1$ shortest paths. Note that the number of vertices in each bag in a
cop-decomposition may be large, and depend on $n$. Why are these bags
useful? Since each bag $B$ contains at
most $r-1$ shortest paths in the induced subgraph $B\setminus B'$ (where $B'$ is the parent of $B$), one can choose a
``net'' of centers along each path so that each node in the graph is
threatened by $O(r)$ centers from any one bag. Hence it now suffices to
bound the number of bags that get close enough to a vertex $z$ so that
some centers from this bag may threaten $z$. (We call such a bag a
\emph{``threatening skeleton''} for~$z$.) As mentioned above, we do not bound the
worst-case number of such threatening skeletons; we prove it suffices to bound
their expected number.

\paragraph{Bounding the Expected Number of Threateners.}
How to bound the expected number of threatening skeletons for some node
$z \in V$? We need a notion of progress. The cop-decomposition
ensures that in any given moment there are less than $r$ bags (a.k.a.\
threatening skeletons) that $z$
can see on the boundaries of its component, where each bag  consists of
a tree with at most $r-1$ paths. We observe the following property of the distances
from $z$ to these trees: if constructing a new tree $T_{new}$ in the induced
subgraph containing $z$ causes some current tree $T_{curr}$ to become farther
from $z$ (or even to be disconnected from $z$) because it cuts off some
short path from $z$ to $T_{curr}$, the distance from $z$ to
$T_{new}$ is strictly less than the distance from $z$ to $T_{curr}$. Indeed, if this distance were to miraculously decrease (deterministically)
by $\Delta/k$ then one can prove a bound of $O{r+k \choose k}$ on the
number of threateners. But why should such a large decrease happen? It
doesn't, but  \emph{we force this to happen in expectation}. We change the above
construction and build a ``buffer'' of some random radius around each
skeleton we build. Note that the supernodes did not have to be trees in
the above arguments,
and hence ``fattening'' them by growing buffers around the trees would not change any of the
preceding arguments.  Now by choosing the buffer radius from a truncated
exponential with rate $O(r)$, we may na\"{\i}vely hope to decrease the
distance by $\Delta/r$ with constant probability (assuming no
truncation). The proof is much more subtle, and requires to overcome the
truncation of the buffer. We use a potential function with delicately
chosen parameters,
such
that for each new tree, this potential increases in expectation by
$\approx r/2^r$. The potential starts at $0$ and once it reaches $r$, it means that $z$ is at
distance $0$ from some buffered tree and will not be threatened
again. Finally, the optional stopping theorem helps us bound the
expected number of threateners by $\approx 2^r$.

\paragraph{Bounding Expected Increase in Potential.}
In order to bound the number of threateners for $z$, the potential
function we use is a sum of exponentials $\sum_{\text{buffers} B} e^{
  - \alpha\, d(z,B)}$ for some parameter $\alpha$; the sum is over those
buffered trees that the node $z$ can see. The main
challenge is that in the worst case, one new buffered tree can cause
all the other current buffered trees to be disconnected from the
component containing $z$, hence losing $r$ summands of the potential. To
overcome this we need to guarantee that the expected gain from the new
tree is $O(r)$ times more than the expected loss of any single current
tree, which is one of the technical cores of the analysis. %
We note that obtaining any deterministic
bound on the number of threateners using a cop-decomposition, rather than
only bounding the expectation, remains an open question.

\subsection{Other Related Work}
\label{sec:related}

The ideas of either finding a ``good'' decomposition or else building a
$K_r$-minor used by~\cite{KPR93, Andreae86} also appear in
``shallow-minor theorems'' of Alon, Seymour, and Thomas~\cite{AST90},
Plotkin, Rao, and Smith~\cite{PRS94}, and others. The parameters and
run-times of these constructions have been considerably improved, see the paper of
Wulff-Nilsen~\cite{W11} and the references therein.

Busch, LaFortune, and Tirthapura~\cite{BLT07} first suggested the idea of
decomposing a graph into paths and building balls around these paths;
they considered this in the context of strong-diameter covers. They give the best constants for covers of
planar graphs; for $K_r$-minor-free graphs, they give $O(1)$-padding and
$O(\log |V|\cdot f(r))$-overlap, where $f(r)$ depends on the Robertson-Seymour
structure theorem.

In contrast to the weak-diameter partitions of Klein et. al. and  Fakcharoenphol and Talwar \cite{KPR93, FT03}, the previously
best \emph{strong-diameter} partitions are due to Abraham et. al. \cite{AGMW10},
who guarantee strong-diameter $\Delta$ and probability of an edge
$\{u,v\}$ being separated is $O(6^r r^2 \cdot \frac{d(u,v)}{\Delta})$. Abraham et. al. \cite{AGMW10} also
present \emph{sparse covers} with strong-diameter $\Delta$, padding of $O(r^2)$ and overlap of $2^{O(r)} r!$.

The papers~\cite{IndykS07,BorradaileLS10,Sidiropoulos10} give algorithms
to probabilistically embed genus-$g$ graphs into planar graphs with
$2^{O(g)}$, $O(g^2)$ and $O(\log g)$ distortion respectively. The ideas
developed in this line of work lead to an asymptotically optimal padding
parameter of $O(\log g)$ for genus-$g$ graphs~\cite{LeeS10}.

For general graphs, the decomposition schemes in,
e.g.,~\cite{A85,LS93,Bar96,CKR01-zero,FRT03} give asymptotically optimal
$O(\log |V|)$ padding. The best result known for tree-width-$r$ graphs was
the same as for $K_r$-minor-free graphs, i.e., $O(r^2)$-padding partitions.

\subsection{Organization of the Paper}

After a few preliminary definitions, we provide in \sectionref{sec:tools} a bound on the expected number of threateners for a wide range of partition algorithms, and show how to use this to bound the padding probability.
Our main result \theoremref{thm:main-weak} is proved in \sectionref{sec:weak-diam}. The three assertions of \theoremref{thm:stong-list} are then proven in Sections~\ref{sec:strong-diam}, \ref{sec:tree} and \ref{sec:genus}.

\section{Definitions and Notation}\label{sec:pre}

\paragraph{Graphs.}
We assume familiarity with graph-theoretic notions; see,
e.g.,~\cite{diestel} for background. Here are some definitions we will
use. Given a graph $G=(V,E)$, a ball around $A\subseteq V$ of radius $t\ge 0$ is $B_G(A,t)=\{u\in V \mid d_G(A,u)\le t\}$. Also let $N(A)=\{u\in V \mid \exists v\in A,~\{u,v\}\in E\}$. For subsets $A,B\subseteq V$ define a relation $\sim$ where $A\sim B$ iff $A\cap N(B)\neq\emptyset$, that is, iff there is an edge between a vertex of $A$ to a vertex of $B$.

A \emph{minor} of $G$ is a subgraph of a graph obtained from $G$ by a sequence of edge contractions.
Equivalently, $G'$ is a minor of $G$ if
there exists a map $f: V(G) \to V(G')$ such that (a)~ for each $u' \in
V(G')$ the ``supernode'' $f^{-1}(u')$ is connected in $G$, and (b)~for every edge $\{u',v'\} \in E(G')$, there is at least one edge between
$f^{-1}(u')$ and $f^{-1}(v')$ in $E(G)$.
A graph $G$ is
\emph{$H$-minor-free} (or \emph{excludes an $H$-minor}) if $G$ does not
contain a minor isomorphic to $H$. As is well-known,
planar graphs are exactly the graphs excluding $K_{3,3}$ and $K_5$ as
minors. In fact, Robertson and Seymour proved that every graph family
closed under taking minors is characterized by a finite set of excluded
minors.

Many one-way implications are also known: if we can show that a class
$\mathscr{G}$ of graphs is closed under taking minors, and $H \not\in
\mathscr{G}$, then $\mathscr{G}$ contains only $H$-minor-free-graphs. Hence, graphs with
tree-width at most $r$ are $K_{r+2}$-minor-free (since tree-width of a
clique is one smaller than its size, and the tree-width of a graph does
not increase under taking minors); graphs with genus
$g$ exclude $K_r$ as a minor for some $r = \Theta(\sqrt{g})$, since the genus
of $K_r$ is $\Theta(r^2)$.

\paragraph{Truncated Exponential Distributions.} We will extensively use
the following probability distribution over positive reals.
The \emph{$[\theta_1,\theta_2]$-truncated exponential distribution} with
parameter $b$ is denoted by $\Texp_{[\theta_1, \theta_2]}(b)$, and
the density function is:
\begin{equation}\label{eq:texpgen}
  f_{texp; b; \theta_1, \theta_2}(y) := \frac{ b\, e^{-b\cdot y} }{e^{-b \cdot \theta_1} - e^{-b
      \cdot \theta_2}} \qquad\qquad
  \text{ for } y \in [\theta_1, \theta_2].
\end{equation}
For the \emph{$[0,1]$-truncated exponential distribution} we drop the
subscripts and denote it by $\Texp(b)$; the density function is
\begin{equation}\label{eq:texpunit}
f_{texp; b}(y) := \frac{ b\, e^{-b\cdot y} }{1 - e^{-b}} \qquad\qquad
\text{ for } y \in [0,1].
\end{equation}

Note that if $Y\propto\Texp(b)$ then $u\cdot Y\propto\Texp_{[0,u]}(b/u)$.

\section{Analysis}\label{sec:tools}

Our algorithms induce an iterative process that creates ``skeletons''
(e.g., trees, paths, or vertices) and remove their neighborhoods (a
buffer), defined according to some truncated exponential distribution,
from the graph.  Once we have these skeletons, our algorithms define a
second iterative process that creates clusters from the
skeletons. %

Let us abstract out the properties needed from our first and second processes.

\begin{definition}[Skeleton-Process]\label{def:pro}
Given a graph $G$, parameters $0\le l<u\le 1$ and $b>0$, any process which generates a sequence of graphs $G=G_0,G_1,\dots$, skeletons $A_0,A_1,\dots$ and vertex sets $K_0,K_1,\dots$, that satisfies the following property is a {\em skeleton-process}:
\begin{itemize}
\item For any $i\ge 0$, $A_i\subseteq V(G_i)$ and $K_i=B_{G_i}(A_i, R_i\Delta )$, where $R_i \propto \Texp_{[l,u]}(b/(u-l))$.
\end{itemize}
The process is {\em threatening} if the graph sequence satisfies
$G_{i+1}=G_i\setminus K_i$, and  the process is {\em cutting} if the graph sequence satisfies $G_{i+1}\supseteq G_0\setminus(\cup_{j\le i}K_j)$.
\end{definition}

The first process is a threatening process which creates buffers around the trees of the cop-decomposition. The second process is a cutting process that creates the actual clusters centered at net-points of the trees. For the strong-diameter results, we will have a single process that satisfies both definitions.

\subsection{Analysis of the Threatening Process: Bounding the Expected Threats}\label{sec:threat}

A crucial property of all of our algorithms is that any vertex $z$ can ``see'' at most $s$ buffers (the $K_i$ sets) at any time, for some parameter $s$ (in the weak-diameter partition we will have $s=r$). By this we mean that for any connected component $C$ in one of the remaining graphs (after some buffers were removed), there are at most $s$ buffers that are connected to $C$ by an edge of $G$. This property will enable us to prove that $z$ is expected to be ``threatened'' by a small number of skeletons, that is, we expect a few skeletons that are sufficiently close to cut a certain ball around $z$.

Consider a threatening skeleton-process with parameters $l=0$, $u\in(0,1]$ and $b=2s$. We prove a bound on the expected number of threateners for a ball around any vertex $z$ of $G$ with padding parameter $\gamma>0$. For some $u\le u'\le 1$, let $\threate_z=\{A_i \mid  d_{G_i}(z,A_i)\le (u'+\gamma)\Delta\}$ be the set of vertex sets whose subset $K_i$ may intersect $B_z=B_G(z,\gamma\Delta)$. Observe that
once $z\in K_j$ for some index $j$ then it is removed from the graph,
and $\threate_z$ cannot increase anymore. For a connected component
$C_i\in G_i$ let ${\cal K}_{|C_i}=\{K_j \mid j<i \wedge C_i\sim
K_j\}$. (Recall that $A \sim B$ if there exists an edge from a node in
$A$ to some node in $B$.)
\begin{lemma}\label{lem:threat}
Suppose that in a threatening skeleton-process we have the property that for every $i\in\N$ and every connected component $C_i\in G_i$, we are guaranteed that $|{\cal K}_{|C_i}|\le s$, then
  \[
  \E[|\threate_z|]\le 6e^{(2s+1)\cdot(u'+\gamma)/u}~.
  \]
\end{lemma}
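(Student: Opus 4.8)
The plan is to set up a potential function over the "visible" buffers and argue that it behaves like a submartingale with a quantifiable positive drift, then invoke the optional stopping theorem. Concretely, for the connected component $C_i \ni z$ in $G_i$ (or $C_i = \emptyset$ once $z$ has been swallowed), define the potential $\Phi_i = \sum_{K_j \in {\cal K}_{|C_i}} e^{-\alpha\, d_{G_i}(z, K_j)/\Delta}$ for a parameter $\alpha$ to be chosen of order $s$ — actually I would track $d_{G_i}(z,A_j)$ or the distance to the skeleton rather than the buffer, so that the randomness of $R_j$ directly shifts each summand; let me phrase it as $\Phi_i = \sum_{A_j \text{ visible from } z} e^{-\alpha(d_{G_i}(z,A_j) - R_j\Delta)/\Delta}$ clamped appropriately. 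Since $|{\cal K}_{|C_i}| \le s$ throughout, $\Phi_i \le s$ always; and the process of $z$ being threatened by a new skeleton corresponds to a new summand of value at least $e^{-\alpha(u+\gamma)}$ being added (up to the conditioning that $K_i$ is the newly created buffer connected to $C_i$). So $\E[|\threate_z|] \cdot e^{-\alpha(u+\gamma)} \lesssim \E[\text{total positive increments of }\Phi]$, and if $\Phi$ is a submartingale bounded by $s$, optional stopping caps the total expected increments by $s$ plus the total expected decrements, which we must also control.

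The two quantities to balance are (i) the expected gain when skeleton $A_i$ is created inside $C_i$: drawing $R_i \sim \Texp_{[0,u]}(b/u)$ with $b = 2s$, the buffer $B_{G_i}(A_i, R_i\Delta)$ eats into $C_i$, and by the memorylessness of the (truncated) exponential the distance from $z$ to the new buffer drops below any threshold $t\Delta$ with probability $\approx e^{-bt/u}$ conditioned on non-truncation, contributing an expected increment of the new summand on the order of $\int_0^u \alpha e^{-\alpha(d - y)} \cdot \frac{(b/u)e^{-(b/u)y}}{1 - e^{-b}}\,dy$; and (ii) the expected loss from each already-visible skeleton $A_j$, which happens when the new buffer $K_i$ cuts off the shortest $z$–$A_j$ path in $G_i$, pushing $d_{G_{i+1}}(z, A_j)$ up (possibly to $\infty$). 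The key structural observation flagged in the introduction is that whenever creating $K_i$ increases the distance to some current $A_j$, we must have $d_{G_i}(z, A_i) < d_{G_i}(z, A_j)$ — the new skeleton is strictly closer than any skeleton it displaces — because the cut-off path passed near $A_i$. This is what lets the single gain term dominate: the worst case is that all $s$ current summands vanish, losing at most $s \cdot e^{-\alpha d_{G_i}(z,A_i)/\Delta}$, but the gain term is a constant (depending on $\alpha, b, u$) times $e^{-\alpha d_{G_i}(z,A_i)/\Delta}$, so choosing $\alpha$ and using $b = 2s$ makes the gain exceed $s$ times the per-term loss by the margin needed for a submartingale — and in fact we want a strictly positive drift of order $e^{-\Theta(s)}$ per step so that the number of steps before $\Phi$ saturates at $s$ is $O(s \cdot e^{\Theta(s)})$.

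Carrying this out: first I would fix the component $C_i$ containing $z$ and condition on the event that step $i$'s skeleton $A_i$ lies in $C_i$ (otherwise $\Phi$ is unchanged and neither $\threate_z$ nor the potential moves). Second, compute $\E[\Delta\Phi_i \mid \mathcal{F}_i]$ by splitting into the new summand (a clean integral against the $\Texp_{[0,u]}(b/u)$ density) minus the sum of losses, invoke the "strictly closer" structural fact to bound each loss by $e^{-\alpha d_{G_i}(z,A_i)/\Delta}$ times the probability that $A_j$'s shortest path is severed — which is itself at most the probability that $R_i\Delta$ exceeds that path's distance, again an exponential tail — and choose $\alpha$ (something like $\alpha = b/u = 2s/u$, or a small constant multiple thereof) so the net drift is $\ge c\, e^{-(2s+1)(1+\gamma/u)} \cdot (\text{something})$; third, note $\Phi_0 = 0$ and $\Phi$ is bounded by $s$, and that a new threatener adds a summand $\ge e^{-\alpha(u+\gamma)\Delta/\Delta}$, so $\E[|\threate_z|] \le e^{\alpha(u+\gamma)} \cdot \E[\sum_i (\Delta\Phi_i)^+]$; fourth, since increments are bounded and the drift is positive, optional stopping / a supermartingale convergence argument bounds $\E[\sum (\Delta\Phi_i)^+] \le \E[\sum (\Delta\Phi_i)^-] + s \le (\text{a geometric-type bound}) $, and assembling the constants yields $\E[|\threate_z|] \le 3e^{(2s+1)(1+\gamma/u)}$. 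The main obstacle — and the technical heart — is step two: handling the truncation of the buffer distribution (the $\Texp_{[0,u]}$ rather than a pure exponential), which breaks exact memorylessness, while simultaneously ensuring the per-term loss bound uses the \emph{same} exponent $e^{-\alpha d_{G_i}(z,A_i)/\Delta}$ as the gain so that the factor-$s$ discrepancy is exactly absorbed by the choice $b = 2s$; getting the constants to line up so the final bound is a clean $3e^{(2s+1)(1+\gamma/u)}$ rather than something messier will require the "delicately chosen parameters" the introduction warns about.
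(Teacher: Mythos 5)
Your proposal follows essentially the same route as the paper's proof in Appendix A: a potential $\Phi=\sum_j e^{-\Theta(s)\,x_j}$ over the normalized (buffer-adjusted) distances from $z$ to the visible skeletons, a gain-versus-loss drift estimate driven by the structural fact that a new buffer can only displace skeletons that are farther than itself, and an optional-stopping argument with rate $b=2s$ chosen so that the single new summand beats the worst case of losing all $s$ existing summands. Two points in your sketch need repair before it closes, though both are local. First, your claim that $\Phi_i\le s$ always is false for the unclamped potential: once $z$ falls inside some buffer the corresponding coordinate $x_j=\bigl(d(z,A_j)-R_j\Delta\bigr)/(u\Delta)$ goes non-positive and that summand exceeds $1$ and is unbounded. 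The paper handles this by replacing $\Phi$ with a clamped $\Phi'$ that jumps to the fixed value $2s$ the first time a coordinate is non-positive, and then checks that the drift inequality survives the clamping (because the claimed expected gain is at most $s$ while the jump is from below $s$ to $2s$); your ``clamped appropriately'' needs to be instantiated this way, since the stopping value must be deterministic for optional stopping to give a clean bound. Second, your first-paragraph accounting via ``total positive increments $\le$ total negative increments plus the cap'' is circular, since the negative increments are exactly what you cannot bound a priori; the correct bookkeeping (which you gesture at later) is the paper's: the drift lower bound $\zeta\,e^{-(2s+1)h}$ is uniform only over \emph{threatening} steps (where $h\le 1+\gamma/u$), so one takes $X_t=\Phi'_t-\zeta\,j_t$ with $j_t$ the number of threatening steps so far, verifies this is a submartingale (non-threatening steps still have non-negative drift), and concludes $\E[j_\tau]\le 2s/\zeta$. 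With those two fixes, and the explicit computation of the gain integral and the maximized per-coordinate loss $\max_{\gamma\in[0,1]}e^{a(h-\gamma)}\Pr[Y>\gamma]$ that you correctly identify as the technical heart, your outline matches the paper's argument.
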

\begin{proof}
Fix any $i\in\N$. W.l.o.g., we may assume that the process always picks
the set $A_i$ in the connected component $C_i$ of $G_i$ that contains
$z$ (the other components do not affect ${\cal J}_z$). Let $\x=\x(i)$ be a vector of the ``normalized distances''
from $z$ to ${\cal K}_{|C_i}$. More formally, if
${\cal K}_{|C_i}=\{K_{i_1},\dots K_{i_l}\}$ (with $l\le s$ by the assumption of the lemma), then for $j\in[l]$ define
$$x_j := \frac{d_{G_i\cup K_{i_j}}(z,A_{i_j})-R_{i_j}\Delta}{u\Delta}~.$$ Intuitively, $x_j$ should have been the distance from $z$ to $K_{i_j}$, normalized by $u\Delta$. Note that by the definition of $K_{i_j}$ we have that $d_{G_i\cup K_{i_j}}(z,K_{i_j})\ge d_{G_i\cup K_{i_j}}(z,A_{i_j})-R_{i_j}\Delta$.

Define the potential function for the vector $\x:=(x_1,\dots,x_l)$ as
\begin{equation}\label{eq:pot}
\Phi(\x)=\sum_{j=1}^le^{-(2s+1)\cdot x_j}~. 
\end{equation}

We would like to analyze the change to $\x$ over time. Assume w.l.o.g that $x_1\le\dots\le x_l$. Let $h := \frac{d_{G_i}(z,A_i)}{u\Delta}\ge 0$ be the normalized distance of $z$ from
the set $A_i$, and let $y=h-R_i/u$.
Observe that if $x_j \leq y$ then the
shortest path from $K_{i_j}$ to $z$ is completely disjoint from $K_i$;
seeking contradiction, assume $a\in K_i$ lies on the shortest path in $G_i$ from $z$ to $K_{i_j}$.
Since every vertex of distance $R_{i_j}\Delta$ from $A_{i_j}$ is in $K_{i_j}$ and thus was removed from the graph, it must be that $d_{G_i\cup K_{i_j}}(a,A_{i_j})>R_{i_j}\Delta$. We conclude that
\[
d_{G_i}(z,a)<d_{G_i\cup K_{i_j}}(z,A_{i_j})-R_{i_j}\Delta=x_j\cdot(u\Delta)\le y\cdot(u\Delta)=d_{G_i}(z,A_i)-R_i\Delta\le d_{G_i}(z,K_i)\le d_{G_i}(z,a)~,
\]
contradiction.

We get that if $j^*$ is the
maximal index such that $x_{j^*}\le y$, then the first $j^*$ entries of $\x$
will not change. The new set $K_i$ will always be in ${\cal
  K}_{|C_{i+1}}$ (recall that $C_{i+1}$ is the component containing $z$
in $G_{i+1}=G_i\setminus K_i$), so we have that the $j^*+1$ entry in
$\x(i+1)$ will be $x_{j^*+1}=y$. For $j^*<j\le l$, it could be the case
that $K_i$ intersects the shortest path from $K_{i_j}$ to $z$, in which case
the distance may increase or $K_{i_j}$ can even be disconnected from
$z$. Note that if $l=s$, then it must be that at least one $K_{i_j}$ is
disconnected from $z$, because we assume that $|{\cal K}_{|C_{i+1}}|\le
s$.

Next we attempt to bound the expected change to the potential function $\Phi$ in any single step. To this end, it suffices to consider the worst scenario, in which all the $K_{i_j}$ for $j^*<j\le l$ become disconnected from $z$ by $K_i$ (in such a case the potential decreases the most). Define the ``filtered subsequence'' $\x \downarrow y$ to be the sequence obtained by dropping all the coordinates of $\x$ which are strictly
larger than $y$, and adding in $y$. (E.g., $(-0.4, -0.3, 0.7, 5, 6.9)
\downarrow 1.42 = (-0.4, -0.3, 0.7, 1.42)$.) So we assume that $\x(i)=\x(i-1)\downarrow y$ (where $y$ is define as above).
Define
\[
\Phi_i=\left\{\begin{array}{ccc} \Phi(\x(i)) & \forall j,~\forall i'
                                               \leq i, \x(i')_j> 0\\
\Phi_{i-1}+ 2s& \text{otherwise} \end{array} \right.
\]

Fix any sequence $\x(0),\dots,\x(i-1)$ (each of length at most $s$), which determines $\Phi_{i-1}$, and fix any $h\ge 0$. Recall that $y=h-Y$ with $Y \propto \Texp(2s)$ as in \eqref{eq:texpunit}. %

\begin{claim}
  \label{claim:phi}
    $\E_Y[\Phi_i-\Phi_{i-1}]\ge (s/2)\cdot e^{-(2s+1)h}$.  
\end{claim}

\begin{proof}
If it is the case that some $\x(i')$ for $i'< i$ has nonpositive coordinate, then
\[
\Phi_i=\Phi_{i-1}+ 2s\ge \Phi_{i-1}+\frac{s\cdot e^{-(2s+1)h}}{1-e^{-2s}}~,
\]
using that $e^{-2s}\le 1/2$ and $e^{-(2s+1)h}\le 1$ (since $h\ge 0$).
So from now on we assume $\x(i')$ has all positive coordinates for all
$i'< i$. Observe that $\x(i)$ will have nonpositive coordinate iff
$Y\ge h$, so we consider the two cases separately. Denote
$\bar{h}=\min\{1,h\}$. The first case is $Y\ge \bar{h}$, so we have
exactly $2s$ increase in potential. The second case is that
$Y<\bar{h}$, in which case we have $\Phi_i=\Phi(\x(i))$. Conditioning
on the event $\{Y<\bar{h}\}$ means that we sample $Y$ from the
distribution $\Texp_{[0, \bar{h}]}(2s)$ with density function
$f_{texp;2s;0,\bar{h}}$. (Indeed, $Y$ is 
$\Texp_{[0, 1]}(2s)$ random variable, and truncating a smaller value
    $\bar{h} \leq 1$ means we sample from the distribution $\Texp_{[0, \bar{h}]}(2s)$.)

The increase of the potential due to the new coordinate $y=h-Y$ is
  $e^{-(2s+1)\cdot (h-Y)}$, so the expected gain is
  \begin{eqnarray*}
    \E[ e^{-(2s+1)\cdot(h-Y)}\mid Y<\bar{h}] &=& e^{-(2s+1)h}\cdot\int_{0}^{\bar{h}} e^{(2s+1)w}\, f_{texp;2s;0,\bar{h}}(w)\, dw\\
     &=&    e^{-(2s+1)h}\cdot\int_{0}^{\bar{h}} \frac{2s\cdot e^w}{1 - e^{-2s\bar{h}}}\, dw \\
     &=&    \frac{2s(e^{\bar{h}}-1)\cdot e^{-(2s+1)h}}{1 - e^{-2s\bar{h}}}~,
  \end{eqnarray*}
Next we analyze the loss in potential for the coordinates $x_j$ that are
  dropped. Recall that a coordinate $x_j$ is dropped exactly when
  $x_j>h-Y$. Since we condition on $Y\in[0,\bar{h}]$, the only interesting case is when
  $x_j=h-\gamma$ for some $\gamma\in [0,\bar{h}]$, which is dropped when
  $Y>\gamma$. As $\x(i-1)$ has at most $s$ coordinates, the expected loss, conditioned on the event $\{Y<\bar{h}\}$, is at most
  \begin{eqnarray*}
    s\cdot \max_{\gamma \in [0,\bar{h}]} \left\{e^{-(2s+1)\cdot(h-\gamma)} \Pr[ Y > \gamma\mid Y<\bar{h} ]\right\} &=&
    s\cdot e^{-(2s+1)h}\cdot\max_{\gamma \in [0,\bar{h}]} \left\{e^{(2s+1)\gamma}\int_{\gamma}^{\bar{h}}
    f_{texp;2s;0,\bar{h}}(w)\, dw\right\}\\
    &=& s\cdot e^{-(2s+1)h}\cdot\max_{\gamma \in [0,\bar{h}]} \left\{e^{(2s+1)\gamma}\cdot \frac{e^{-2s\gamma} -
      e^{-2s\bar{h}}}{1 - e^{-2s\bar{h}}}\right\}\\
    &=&s\cdot \frac{e^{-(2s+1)h}}{1-e^{-2s\bar{h}}}\cdot\max_{\gamma\in [0,\bar{h}]}\left\{e^\gamma-e^{(2s+1)\gamma-2s\bar{h}}\right\}\\
    &\le&  s\cdot \frac{e^{-(2s+1)h}}{1-e^{-2s\bar{h}}}\cdot e^{\bar{h}}~.
  \end{eqnarray*}
We conclude that
\[
\E_Y[\Phi_i-\Phi_{i-1}\mid Y<h]\ge \frac{2s(e^{\bar{h}}-1)\cdot e^{-(2s+1)h}}{1 - e^{-2s\bar{h}}}-s\cdot \frac{e^{-(2s+1)h}}{1-e^{-2s\bar{h}}}\cdot e^{\bar{h}}=s(e^{\bar{h}}-2)\cdot \frac{e^{-(2s+1)h}}{1-e^{-2s\bar{h}}}
\]
Thus the expected increase in potential is
\begin{eqnarray*}
\E[\Phi_i-\Phi_{i-1}] &=&  \E[\Phi_i-\Phi_{i-1}\mid Y<\bar{h}]\cdot \Pr[Y<\bar{h}]+\E[\Phi_i-\Phi_{i-1}\mid Y\ge \bar{h}]\cdot \Pr[Y\ge \bar{h}]\\
&\ge&s(e^{\bar{h}}-2) \cdot  \frac{e^{-(2s+1)h}}{1 - e^{-2s\bar{h}}}\cdot \frac{1-e^{-2s\bar{h}}}{1-e^{-2s}}+2s\cdot\frac{e^{-2s\bar{h}}-e^{-2s}}{1-e^{-2s}}\\
  &=& \frac{(e^{\bar{h}}-2) + 2 (e^{-2s\bar{h}} - e^{-2s})
      e^{(2s+1)h}}{1 - e^{-2s}} \cdot s\, e^{-(2s+1)h}~.
\end{eqnarray*}
It now suffices to show that the first term is at least $\frac12$;
since the denominator is at most $1$, we focus on the
numerator. If $h \geq 1$ then $\bar{h} = \min(h,1) = 1$, we get
$e - 2 > 0.5$. Else $h \in [0,1]$ and so $\bar{h} = h$, which simplifies the expression
to
\begin{gather}
  (e^h-2) + 2 (e^{-2sh} - e^{-2s}) e^{(2s+1)h} = (e^h - 2) + 2e^h(1 -
  e^{-2s(1-h)}).
\end{gather}
This an increasing function of $s$, so smallest when $s = 1$. Now the
resulting expression $(e^h-2) + 2 (e^{-2h} - e^{-2}) e^{3h}$ is
unimodal for $h \in [0,1]$, and minimized at $h = 1$, again
giving $e-2 > 0.5$. This completes the proof. 
\end{proof}

Define $\zeta:=(s/2)\cdot e^{-(2s+1)\cdot(u'+\gamma)/u}$.%
\agnote{Changed it so that as soon as we get a negative coordinate,
  the process moves to adding $\zeta$ and stays there forever.}
Recall that for every $A_i\in\threate_z$ we have that $d_{G_i}(z,A_i)\le(u'+\gamma)\Delta$ and thus $h=d_{G_i}(z,A_i)/(u\Delta)\le (u'+\gamma)/u$. Observe that the expectation of Claim~\ref{claim:phi} is taken only over the current choice of $Y$, and since $Y$ is chosen independently we can condition on any other event that depends on previous steps, and obtain the same bound. In particular, for $A_i\in\threate_z$,
\begin{equation}\label{eq:gg}
\E[\Phi_{i+1}-\Phi_i\mid \Phi_i]\ge  \zeta~.
\end{equation}
Also note that the bound of Claim~\ref{claim:phi} is always positive, so even if $A_i\notin\threate_z$ we still have
\begin{equation}\label{eq:ggg}
\E[\Phi_{i+1}-\Phi_i\mid \Phi_i]\ge 0 ~.
\end{equation}
For $t\in \N$ let $j_t=|\{i~:~A_i\in \threate_z \mbox{ and } i\le
t\}|$ be the number of time steps until $t$ in which $z$ is
threatened.

Let $\cR_t$ be the $\sigma$-field defined by the independent $R_i$
variables observed until the $t^{th}$ step of this process, so that
$\{\cR_t\}_t$ forms a filtration. \agnote{Reworded}
We claim that the process $X_0,X_1,\dots$ where $X_t=\Phi_t-\zeta\cdot
j_t$, is a submartingale adapted to this filtration. To prove this consider two cases: If $A_{t+1}\in \threate_z$ then $j_{t+1} = j_t+1$, and by \eqref{eq:gg} we get $\E[\Phi_{t+1}\mid \Phi_t,j_t]\ge\Phi_t+\zeta$, and so
\[
\E[X_{t+1}\mid \cR_t]=\E[\Phi_{t+1}-\zeta\cdot j_{t+1}\mid \Phi_t,j_t]\ge \Phi_t+\zeta-\zeta\cdot j_{t+1} = \Phi_t+\zeta-\zeta\cdot (j_t+1) = X_t~.
\]
If it is the case that $A_{t+1}\notin \threate_z$, then $j_{t+1}=j_t$ and  by \eqref{eq:ggg} 
\[
\E[\Phi_{t+1}-\zeta\cdot j_{t+1}\mid \Phi_t,j_t]\ge \Phi_t-\zeta\cdot j_t=X_t~.
\]

\agnote{Reworded}
The stopping time of this (sub)martingale $X_0,X_1,\dots$ is a random
variable $\tau$ that has support in $\N$, and such that the event
$\tau=t$ is measurable with respect to the filtration $\cR_t$. %
 Define $\tau$ as the first time in which $\x(\tau)$ has a nonpositive coordinate. Observe that if $t$ is the time where $z\in K_t$, then it must be that $d_{G_t}(z,A_t)\le R_t\Delta$, and so we get a nonpositive coordinate in $\x(t)$ which implies that $\tau=t$. Since the stopping time is bounded by $|V|$ (there can be at most $|V|$ rounds, because at least one vertex is removed every round), we can apply Doob's optional stopping time Theorem \cite[Section~12.5]{GS01} and obtain that
\[
\E[\Phi_\tau]-\zeta\cdot\E[j_\tau]=\E[X_\tau]\ge\E[X_0]=0~,
\]
as the initial vector $\x(0)$ is empty, so $\Phi(\x(0))=0$.
Finally, as $\Phi_\tau=\Phi(\x(\tau-1))+2s$, and $\x(\tau-1)$ has all positive coordinates, and it is the vector of normalized distances to ${\cal K}_{|C_{\tau-1}}$ which by our assumption has size at most $s$, we have that $\Phi(\x(\tau-1))\le s$, and thus $\E[\Phi_\tau]\le 3s$.
Finally, we obtain that
\[
\E[|\threate_z|]=\E[j_\tau]\le 3s/\zeta = 6e^{(2s+1)\cdot(u'+\gamma)/u}~.
\]
This completes the proof of \lemmaref{lem:threat}.
\end{proof}

\subsection{Analysis of the Cutting Process: Bounding the Probability of Cutting a Ball}
\label{sec:cut}

In this section we give a bound on the probability that a ball is cut by a cutting skeleton-process, which depends on the {\em expected} number of threateners.

Consider a cutting skeleton-process as in \defref{def:pro} with parameters $0\le l<u\le 1$, $b>0$. Fix $z\in V(G)$, a parameter $\gamma>0$ and set $B_z=B_G(z,\gamma\Delta)$. Let $\threat_z=\{A_i \mid d_{G_i}(z,A_i)\le (u+\gamma)\Delta\}$ be the set of vertex sets whose subset $K_i$ may intersect $B_z$. Let $N := |\threat_z|$ be a random variable with $\tau = \E[N]$. We say that $B_z$ is {\em cut} by the skeleton-process if it intersects more than a single $K_i$.
\begin{lemma}\label{lem:cut}
For $\delta=e^{-2b\gamma/(u-l)}$, the probability that $B_z=B_G(z,\gamma\Delta)$ is cut by a cutting skeleton-process with the property that $\tau=\E[|\threat_z|]$, is at most
\[
(1-\delta)\left(1+\frac{\tau}{e^b-1}\right)~.
\]
\end{lemma}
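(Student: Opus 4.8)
\textbf{Proof plan for Lemma~\ref{lem:cut}.}

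The plan is to condition on the number of threateners $N = |\threat_z|$ and, within each such conditioning, bound the probability that the ball $B_z$ is cut, i.e.\ that $B_z$ meets two distinct buffers $K_i$. First I would reveal the random radii $R_i$ one at a time in the order of the process. The key structural observation is that a threatener $A_i$ with $d_{G_i}(z,A_i) = \rho_i \le (u+\gamma)\Delta$ has $K_i = B_{G_i}(A_i, R_i\Delta)$ \emph{avoid} touching $B_z$ precisely when $R_i\Delta < \rho_i - \gamma\Delta$, and \emph{swallow} $B_z$ entirely (so that $B_z \subseteq K_i$, hence $B_z$ is not subsequently cut in a cutting process, since $G_{i+1} \supseteq G_0 \setminus \bigcup_j K_j$ removes $B_z$ from play) when $R_i$ is large enough. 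The ``bad'' event is the intermediate range where $K_i$ clips $B_z$ without engulfing it. Because $R_i \sim \Texp_{[l,u]}(b/(u-l))$, the memoryless-type estimate from the truncated exponential density gives: conditioned on $A_i$ being a threatener and on $B_z$ not yet removed, the probability that $K_i$ cuts rather than either misses or swallows $B_z$ is at most $\frac{1}{e^b - 1}$ up to the truncation correction, and the probability that $K_i$ swallows $B_z$ (terminating the exposure) is at least $\delta = e^{-2b\gamma/(u-l)}$.

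Concretely, I would argue as follows. Process the threateners $A_{i_1}, A_{i_2}, \dots, A_{i_N}$ in order. Let $p$ be the probability that a given threatener, given that $B_z$ is still alive, has its buffer $K$ intersect $B_z$ at all (equivalently $R\Delta \ge \rho - \gamma\Delta$ for its distance $\rho$); among these, let $q$ be the conditional probability that $K$ does \emph{not} fully contain $B_z$ — this is the ``cut-causing'' outcome. From the density~\eqref{eq:texpgen}, one computes $p \le \frac{1}{e^b-1}\cdot\frac{1}{1-\delta}$-type bounds and $1 - q \ge \delta$; the factor $\frac{1}{1-\delta}$ absorbs the denominator $e^{-b\theta_1}-e^{-b\theta_2}$ in the truncated density. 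The ball $B_z$ is cut only if at least one threatener's buffer intersects $B_z$ without the \emph{first} such intersecting buffer swallowing it — once a buffer swallows $B_z$, the point $z$ (and its whole ball) is deleted, so no further cutting can occur. Hence
\[
\Pr[B_z \text{ is cut}] \;\le\; \E\!\left[\sum_{j=1}^{N} \Pr[\text{$j$-th threatener's buffer clips but does not swallow } B_z]\right] \;\le\; \E[N]\cdot p\cdot q \;\le\; \frac{\tau}{e^b-1}\cdot\frac{q}{1-\delta},
\]
and then bounding $q \le 1$ crudely while keeping one copy of $(1-\delta)$ and separately accounting for the ``miss entirely'' contribution $(1-\delta)$ gives the stated form $(1-\delta)\left(1 + \frac{\tau}{e^b-1}\right)$. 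I would take care that the additive $1$ inside the parenthesis comes from the event that \emph{no} buffer swallows $B_z$ at all yet some buffer still clips it, versus the event that the very first clipping buffer happens to clip-not-swallow.

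The main obstacle I anticipate is handling the dependence and the stopping carefully: the event ``$A_i$ is a threatener'' depends on $G_i$, which depends on all earlier radii $R_0,\dots,R_{i-1}$, and the set $\threat_z$ itself is random (its size is $N$). The clean way around this is to expose radii sequentially and use that, conditioned on the entire history up to step $i$ and on $B_z\not\subseteq \bigcup_{j<i}K_j$, the radius $R_i$ is still a fresh $\Texp_{[l,u]}(b/(u-l))$ variable; then the per-step ``clip-not-swallow'' probability is bounded by a deterministic quantity independent of the history, so summing over the (random) set of threateners and taking expectation yields the $\tau$ factor via linearity. A secondary technical point is verifying the truncated-exponential tail computation: that $\Pr[R\Delta \ge a\Delta \mid R\Delta \ge (a-2\gamma)\Delta\text{-type event}]$ behaves like the untruncated memoryless bound $e^{-b\cdot(\text{gap})/(u-l)}$ up to the $\frac{1}{1-\delta}$ normalization — this is a direct integration of~\eqref{eq:texpgen} but must be done with the right endpoints to land exactly on $\delta = e^{-2b\gamma/(u-l)}$ (the factor $2\gamma$ rather than $\gamma$ presumably reflecting that a ball of radius $\gamma\Delta$ needs the buffer to grow by $2\gamma\Delta$ in the relevant distance to go from first-touch to full-containment).
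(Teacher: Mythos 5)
Your high-level structure matches the paper's: reveal the radii sequentially, classify each threatener's buffer as missing, clipping, or swallowing $B_z$, note that swallowing removes $B_z$ from play in a cutting process, and identify the $2\gamma$ gap between first-touch and full-containment as the source of $\delta=e^{-2b\gamma/(u-l)}$. But the two per-threatener estimates at the core of your accounting are both false in general, and this is a genuine gap. First, you claim the touch probability $p=\Pr[R\Delta\ge \rho-\gamma\Delta]$ is at most roughly $\tfrac{1}{e^b-1}$; this fails badly when the threatener is close (if $d_{G_i}(A_i,B_z)\le l\Delta$ the buffer touches $B_z$ with probability $1$). Second, you claim the swallow-given-touch probability is at least $\delta$; this fails because of truncation: if $\rho+2\gamma\Delta>u\Delta$ the buffer can never swallow $B_z$, so $q=1$. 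The correct replacement, which the paper proves as Claim~\ref{claim:cut}, is the single inequality
\[
\Pr[\text{clip at step } i]\;\le\;(1-\delta)\left(\Pr[\text{touch at step } i]+\frac{1}{e^b-1}\right),
\]
valid for every distance $\rho$; it follows from integrating the truncated density over $[\max\{l,\rho\},\,\rho+2\gamma]$ and absorbing the truncation tail $e^{-bu/(u-l)}$ into the additive $\tfrac{1}{e^b-1}$ term. The two summands are then summed differently: the events ``first touch at step $i$'' are pairwise disjoint, so their probabilities sum to at most $1$ (this is where the additive $1$ comes from), while the correction terms are summed over the indicator of being a threatener, giving $\tau$ by linearity of expectation. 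Your proposed chain $\E[N]\cdot p\cdot q\le\frac{\tau}{e^b-1}\cdot\frac{q}{1-\delta}$ followed by ``bounding $q\le 1$ crudely'' cannot be repaired into the stated bound: it both loses the $(1-\delta)$ factor in front of $\tfrac{\tau}{e^b-1}$ (you would get $\tfrac{1}{1-\delta}$ instead) and has no mechanism for producing the $(1-\delta)\cdot 1$ term, which must come from a telescoping over disjoint first-touch events rather than from a per-threatener product bound.

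Your instincts about the delicate points are sound — the sequential conditioning, the freshness of $R_i$ given the history, and the provenance of the additive $1$ — but the proof as proposed does not go through without replacing the product decomposition $p\cdot q$ by the additive decomposition above.
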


Let us introduce some more notation and properties before proving this Lemma.
Define the following events:
\begin{alignat*}{2}
\cC_i &= \{B_z\cap K_i\notin\{\emptyset,B_z\}\} & \qquad &
\text{``$B_z$ cut in round $i$ (by $A_i$)''}, \\
\cF_i &= \{B_z\cap K_i=\emptyset\} & & \text{``$i$ was a \texttt{no-op}
  round''}, \\
\cE_i &= \bigg\{\cC_i\wedge\bigwedge_{j<i}\cF_j\bigg\} & &
\text{``$B_z$ first cut in round $i$''.}
\end{alignat*}

Denote by $\cF_{(< i)}$ the event $\bigwedge_{j<i}\cF_j$, so that $\cE_i =
(\cC_i \land \cF_{(<i)})$.
Denote by $\cFbar_i$ the complement of $\cF_i$. Observe that $\cC_i$ (respectively $\cFbar_i$) implies that $A_i\in\threat_z$, so
\begin{eqnarray}\label{eq:full1}
\Pr[\cC_i]=\Pr[\cC_i\wedge A_i\in\threat_z]=\Pr[A_i\in\threat_z]\cdot\Pr[\cC_i\mid A_i\in\threat_z]~,\\\label{eq:full2}
\Pr[\cFbar_i]=\Pr[A_i\in\threat_z]\cdot\Pr[\cFbar_i\mid A_i\in\threat_z]~.
\end{eqnarray}
and the same holds also when conditioning on any other event. We have the following claim:

\begin{claim}\label{claim:cut}
For each $i\in\N$,
\[
\Pr[\cC_i\mid \cF_{(<i)}, A_i\in\threat_z]\le (1-\delta)\cdot\left(\Pr[\cFbar_i\mid \cF_{(<i)},A_i\in \threat_z]+\frac{1}{e^b-1}\right)~.
\]
\end{claim}
\begin{proof}
Fix any graph $G_i$ and any set $A_i\subseteq V(G_i)$ that agree with the conditioning on $\cF_0,\dots \cF_{i-1}$ and so that $A_i\in \threat_z$. Denote by $\rho=d_{G_i}(A_i,B_z)$, $\bar{b}=b/(u-l)$, and let $m=\max\{l,\rho\}$ . Recall that $R_i$ is chosen independently, so
\begin{eqnarray*}
\Pr[\cFbar_i\mid \cF_0,\dots,\cF_{i-1},A_i\in \threat_z,A_i]&=&\int_m^u\frac{\bar{b} e^{-\bar{b}y}}{e^{-\bar{b}l}-e^{-\bar{b}u}}dy\\
&=&\frac{e^{-\bar{b}m}-e^{-\bar{b}u}}{e^{-\bar{b}l}-e^{-\bar{b}u}}~.
\end{eqnarray*}
Since $\cF_0,\dots,\cF_{i-1}$ occurred and $G_i\supseteq G_0\setminus(\cup_{j<i}K_j)$, we have that $B_z\subseteq
G_i$. Now if $R_i\ge\rho+2\gamma$ then by the triangle inequality
$B_z\subseteq K_i$, and the ball is ``saved''. This bounds the cut
probability thus:
\begin{eqnarray*}
\Pr[\cC_i\mid \cF_{(<i)},A_i\in \threat_z,A_i]&\le&\int_m^{\rho+2\gamma}\frac{\bar{b} e^{-\bar{b}y}}{e^{-\bar{b}l}-e^{-\bar{b}u}}dy\\
&\le&\frac{e^{-\bar{b}m}-e^{-\bar{b}(m+2\gamma)}}{e^{-\bar{b}l}-e^{-\bar{b}u}}\\
&=&\frac{e^{-\bar{b}m}(1-\delta)}{e^{-\bar{b}l}-e^{-\bar{b}u}}\\
&=&(1-\delta)\cdot\Pr[\cFbar_i\mid \cF_{(<i)},A_i\in \threat_z,A_i]+(1-\delta)\frac{e^{-\bar{b}u}}{e^{-\bar{b}l}-e^{-\bar{b}u}}\\
&=& (1-\delta)\cdot\left(\Pr[\cFbar_i\mid \cF_{(<i)},A_i\in \threat_z,A_i]+\frac{1}{e^b-1}\right)~.
\end{eqnarray*}
Finally, because the bound holds for any $A_i$, it holds without
conditioning on it.
\end{proof}

\begin{proof}[Proof of \lemmaref{lem:cut}]

Observe that for each $i\in[N]$, the events $\left\{\cFbar_i\wedge\cF_{(<i)}\right\}$ are pairwise disjoint (this is the event that $B_z$ is either cut or contained in $K_i$ for the first time),
thus by the law of total probability,
\begin{equation}\label{eq:gq}
\sum_{i\in\N}\Pr\left[\cFbar_i\wedge\cF_{(<i)}\right]\le 1~.
\end{equation}
Also, by linearity of expectation
\begin{equation}\label{eq:lin}
\tau = \sum_{i\in\N}\Pr[A_i\in\threat_z]~.
\end{equation}
To bound the probability of the ball being cut, we start off with the
trivial union bound:
\begin{align*}
\Pr\left[\bigcup_{i\in\N}\cE_i\right]&\le \sum_i\Pr[\cE_i] = \sum_i\Pr\left[\cC_i\wedge\cF_{(<i)}\right]\\
&=\sum_i\Pr[\cC_i\mid \cF_{(<i)}] \cdot \Pr[\cF_{(<i)}] \\
&\stackrel{\eqref{eq:full1}}{=}
  \sum_i \Pr[\cC_i \mid \cF_{(<i)}, A_i\in\threat_z]
  \cdot \Pr[A_i\in\threat_z\mid \cF_{(<i)}]
  \cdot \Pr[\cF_{(<i)}] \\
&\stackrel{\text{\footnotesize \claimref{claim:cut}}}{\leq}
  \sum_i (1-\delta)\left(\Pr[\cFbar_i\mid \cF_{(<i)},A_i\in\threat_z]+\frac{1}{e^b-1}\right)
  \cdot \Pr[A_i\in\threat_z\mid \cF_{(<i)}]
  \cdot \Pr[\cF_{(<i)}] \\
 &\stackrel{\eqref{eq:full2}}{=} (1-\delta)\cdot\sum_i\Pr\left[\cFbar_i\wedge\cF_{(<i)}\right]+\sum_i\Pr\left[A_i\in\threat_z\wedge\cF_{(<i)}\right]\cdot\frac{1-\delta}{e^b-1}\\
 &\stackrel{\eqref{eq:gq}}{\le}(1-\delta)+\frac{1-\delta}{e^b-1}\cdot\sum_i\Pr[A_i\in\threat_z]\\
 &\stackrel{\eqref{eq:lin}}{=}(1-\delta)\left(1+\frac{\tau}{e^b-1}\right)~.
\end{align*}
This completes the proof.
\end{proof}

\section{A Weak-Diameter Partition}
\label{sec:weak-diam}

In this section, we show how to construct a weak-diameter partition for
$K_{r+1}$-minor-free graphs which is $O(r)$-padded (with constant $\delta=1/160$). The ideas here will later
extend to the case of strong-diameter partitions with a weaker
$(O(r^2),O(1/r^2))$-padding.

\subsection{The Algorithm}
\label{sec:weak-algo}

At a high level, the algorithm works as follows: in each step, pick a
connected component of the remaining graph, and find (in a specific way)
a shortest-path tree $T$ in this component. Delete a random neighborhood
of $T$ from the graph, and recurse on each connected component of the
graph, if any. We then construct a net of points on each
tree, and from these net points grow ``balls'' of random radius to form
the small-diameter regions of the partition. A key property to ensure
the padding guarantee is that each vertex is expected to be close to few
of these paths. We show that this property holds, otherwise we can
construct a $K_{r+1}$-minor in $G$.

More specifically, the algorithm maintains a set of trees $T_i$ and
supernodes $S_i$ that will be used in the construction, each tree and
supernode have a ``center'' vertex associated with them. Let us
describe a generic $i$-th iteration of the algorithm. Let ${\cal S}$ be
the set containing all the supernodes created so far, initially this
will be empty. Let $C$ be a connected component in the graph
$G_i=G\setminus(\cup{\cal S})$, where $\cup \S$ is the set of all
vertices lying in the supernodes in $\S$, initially $G\setminus(\cup{\cal S})$ will be the
entire graph. Let ${\cal S}_{|C} = \{S\in{\cal S}: S\sim C\}$ be the set
of supernodes that have a neighbor in component $C$. Say $\SC = \{S'_1,
S'_2, \ldots, S'_k\}$, and consider the vertices $F_j = N(S'_j) \cap C$ for
each supernode, which are vertices in $C$ neighbors of these
``adjacent'' supernodes. (These $F_j$'s may intersect.) We pick an
arbitrary vertex $u_i$ from $C$ and build a tree $T_i$ rooted at $u_i$,
which is comprised of shortest paths from $u_i$ to each of the sets
$F_j$ (that is, for each $j$ take a shortest path from $u_i$ to the nearest vertex in $F_j$). Define the next supernode
\[ S_i := B_{G_i}(T_i,R_i\Delta), \] where $R_i
\propto \Texp_{[0,1/8]}(16r)$. (Recall the definition of the truncated exponential distribution from \eqref{eq:texpgen}.)

In order to create the random partition, choose a $\Delta/8$-net $N_i$
over $T_i$, and enumerate $N_i=\{v_1,\dots,v_{|N_i|}\}$. For each $1\le
j\le |N_i|$, create a cluster $B_{G_i}(v_j,\alpha_j\Delta)\cap U_j$
(where $U_j$ is the set of points which have no cluster yet), where each
$\alpha_j \propto \Texp_{[1/4, 1/2]}(160r)$. This completes the description
of the algorithm; it is also given as Algorithm~1 and~2.

\begin{algorithm}\label{alg:a}
\caption{\texttt{Weak-Random-Partition}($G$,$\Delta$,$r$)}
\begin{algorithmic}[1]
\STATE Let $G_0 \leftarrow G$, $i\leftarrow 0$.
\STATE Let $\mathcal{S} \leftarrow \emptyset$.
\STATE Let $\mathcal{T} \leftarrow \emptyset$.
\WHILE {$G_i$ is non-empty}
\STATE Let $C_i$ be a connected component of $G_i$.
\STATE Pick $u_i \in C_i$. Let $T_i$ be a tree rooted at $u_i$ that consists of shortest paths (in $G_i$) from $u_i$ to the closest vertex of $N(S)$ for each supernode $S\in\mathcal{S}_{|C_i}$.
\STATE Let $R_i$ be a random variable drawn independently from the
distribution $\Texp_{[0,1/8]}(16r)$.
\STATE Let $S_i \leftarrow B_{G_i}(T_i,R_i\Delta)$ be a neighborhood of $T_i$.
\STATE Add $S_i$ to $\mathcal{S}$.
\STATE Add $T_i$ to $\mathcal{T}$.
\STATE $G_{i+1} \leftarrow G_i \setminus S_i$.
\STATE $i\leftarrow i+1$.
\ENDWHILE
\RETURN \texttt{Create-Balls}($G$,${\cal T}$,${\cal S}$,$\Delta$,$r$).
\end{algorithmic}
\end{algorithm}

\begin{algorithm}\label{alg:b}
\caption{\texttt{Create-Balls}($G$,${\cal T}$,${\cal S}$,$\Delta$,$r$)}
\begin{algorithmic}[1]
\STATE $P=\emptyset$.
\FOR {$i=1,\dots,|{\cal T}|$}
\STATE Let $N_i=\{v_1,\dots,v_{|N_i|}\}$ be a $\Delta/8$-net of $T_i$.
\FOR {$j=1,\dots,|N_i|$}
\STATE Let $\alpha_j$ be a random variable drawn independently from
distribution $\Texp_{[1/4,1/2]}(80r)$
\STATE Add $B_{G_i}(v_j,\alpha_j\Delta)\setminus\cup P$ as a cluster to the partition $P$.
\ENDFOR
\ENDFOR
\RETURN $P$.
\end{algorithmic}
\end{algorithm}

\begin{figure}[ht]
\begin{center}
    \fbox{\includegraphics[width=143pt]{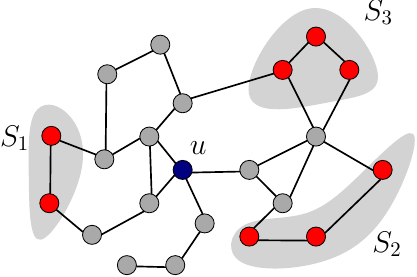}}
    \hspace{1pt}
    \fbox{\includegraphics[width=143pt]{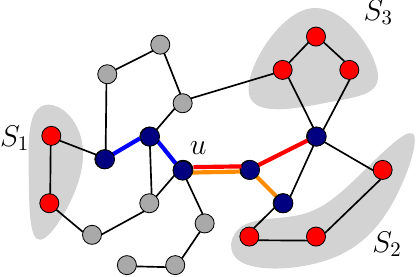}}
    \hspace{1pt}
    \fbox{\includegraphics[width=143pt]{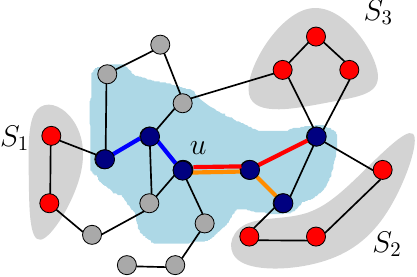}}
    \caption{\emph{An iteration of the algorithm. On the left, there are
        three supernodes $S_1,S_2,S_3$ neighboring the current component
        with $u$ as a root. In the middle, we have a tree $T_4$
        comprised of three shortest path from $u$. On the right, the new supernode $S_4$ which is a $1$-neighborhood of $T_4$ (observe that this neighborhood is taken in the connected component containing $u$).}}
        \end{center}
\end{figure}

\subsection{The Analysis}
\label{sec:weak-analysis}

The following invariant holds for each time step $i$:
\begin{invariant}\label{inv:sim1}
  For every $i\ge 0$, every connected component $C$ of $G_{i}$ satisfies
  that if $S,S'\in{\cal S}_{|C}$ then $S\sim S'$.
\end{invariant}

\begin{proof}
  The proof is by induction; the base case is trivial as there are no
  supernodes in $\SC$. Now by induction, assume that the invariant holds
  in $G_{i}$. Let $T_{i}$ and $S_{i}$ be the tree and supernode
  constructed in step $i$ in the component $C_{i}$. Let $C$ be some
  connected component of $G_{i+1}$, and $S,S'\in{\cal S}_{|C}$. If
  $C\cap C_{i}=\emptyset$ then $C$ is a component of $G_{i}$ as well;
  moreover, as $S_i\subseteq C_i$ it must be that $S_{i}\nsim C$ so neither of $S,S'$ can be $S_i$, and hence
  we can use the induction hypothesis to infer that $S\sim S'$. On the
  other hand, suppose that $C\subseteq C_{i}$. There are two cases: if
  $S_{i}\notin\{S,S'\}$ we have $S\sim S'$ by the induction hypothesis
  on $C_{i}$. On the other hand, suppose $S_{i} = S$ (w.l.o.g.). Recall
  that $T_{i}$ was chosen so that it contains a neighbor of every supernode in ${\cal
    S}_{|C_{i}}$ and $T_{i}\subseteq S_{i}$, we have that $S_{i}\sim
  S'$.
\end{proof}
\invref{inv:sim1} implies that for each connected component $C$,
contracting the supernodes of ${\cal S}_{|C}$ yields a $K_{|{\cal
    S}_{|C}|}$ minor, so we obtain the following corollary.
\begin{corollary}\label{cor:ler1}
  If $G$ excludes $K_{r+1}$ as a minor, then for every time step $i$,
  the connected component $C_{i}$ has $|{\cal S}_{|C_{i}}|\le r$. In
  particular, the tree $T_i$ is made up of at most $r$ shortest paths
  in $G_i$.
\end{corollary}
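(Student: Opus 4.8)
The plan is to read \invref{inv:sim1} as saying that the previously-built supernodes adjacent to a component form a clique after contraction, so that a large $\mathcal{S}_{|C_i}$ would produce a large clique minor in $G$. Before invoking the invariant I would record two facts that are immediate from Algorithm~\ref{alg:a}. First, the supernodes are pairwise vertex-disjoint: $S_j = B_{G_j}(T_j, R_j\Delta)$ is a subset of $V(G_j) = V(G)\setminus(S_0\cup\cdots\cup S_{j-1})$, hence disjoint from all earlier supernodes. Second, each $S_j$ is connected (in $G_j$, hence in $G$): it is a metric ball around the connected set $T_j$, and in any (weighted) graph a ball $B(A,\rho)$ around a connected set $A$ is connected, since every vertex of the ball is joined to $A$ by a path of length $\le\rho$, all of whose vertices again lie in the ball.

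Given these, I would fix the time step $i$ and write $\mathcal{S}_{|C_i} = \{S'_1,\dots,S'_k\}$. Applying \invref{inv:sim1} to the component $C_i$ of $G_i$, any two of these supernodes satisfy $S'_a \sim S'_b$, i.e.\ there is an edge of $G$ with one endpoint in $S'_a$ and the other in $S'_b$. Thus $S'_1,\dots,S'_k$ are $k$ pairwise-disjoint, connected, pairwise-adjacent, nonempty vertex sets of $G$; mapping $S'_a$ onto the $a$-th vertex of $K_k$ (and the remaining vertices of $G$ arbitrarily) exhibits $K_k$ as a minor of $G$. Since $G$ is assumed $K_{r+1}$-minor-free, this forces $k = |\mathcal{S}_{|C_i}| \le r$. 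The ``in particular'' clause is then immediate: by construction $T_i$ is the union, over $S\in\mathcal{S}_{|C_i}$, of one shortest path (in $G_i$) from $u_i$ to $N(S)\cap C_i$, hence it consists of at most $r$ shortest paths of $G_i$.

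I expect no real obstacle here; the substance sits in \invref{inv:sim1}, and what remains is bookkeeping. The only points to be careful about are (a) correctly establishing the two preliminary facts — in particular that metric balls around connected sets are connected, which is precisely what lets us contract each supernode to a single minor-vertex; (b) keeping the direction of the implication straight, so that excluding $K_{r+1}$ yields $k\le r$ rather than $k \le r+1$; and (c) noting that $\mathcal{S}_{|C_i}$ contains only genuine earlier supernodes (not $C_i$ or $T_i$ itself), so no off-by-one enters the count of shortest paths forming $T_i$.
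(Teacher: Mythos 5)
Your proof is correct and follows exactly the paper's route: Invariant~\ref{inv:sim1} makes the supernodes in $\mathcal{S}_{|C_i}$ pairwise adjacent, and contracting these (disjoint, connected) supernodes yields a $K_{|\mathcal{S}_{|C_i}|}$ minor, forcing $|\mathcal{S}_{|C_i}|\le r$; the paper merely leaves implicit the disjointness and connectivity checks you spell out. The ``in particular'' clause is handled identically.
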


\begin{claim}
  \label{clm:weak-diam}
  The algorithm above generates a $\Delta$-bounded partition of $G$.
\end{claim}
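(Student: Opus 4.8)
I would verify two things separately: that the returned $P$ is a genuine partition of $V(G)$, and that every part of $P$ has weak diameter at most $\Delta$. The only quantitative input is the bookkeeping of radii: a buffer of radius $\le\Delta/8$ around $T_i$, together with a $\Delta/8$-net of $T_i$, places every vertex within $G_i$-distance $\Delta/4$ of some net point, while the ball radii $\alpha_j\Delta$ lie in $[\Delta/4,\Delta/2]$, which is simultaneously large enough to reach such a vertex and small enough to keep cluster diameters below $\Delta$.

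\textbf{$P$ is a partition.} First I would observe that the \texttt{while} loop of Algorithm~1 terminates and the supernodes $\{S_i\}$ partition $V(G)$: each $T_i$ contains its root $u_i$, so $S_i=B_{G_i}(T_i,R_i\Delta)\supseteq T_i$ is nonempty and $G_{i+1}=G_i\setminus S_i$ is strictly smaller, while for $j>i$ we have $S_j\subseteq G_j\subseteq G_{i+1}=G_i\setminus S_i$, so the $S_i$'s are pairwise disjoint and their union is all of $V(G)$. Next, fix any $v\in V(G)$ and let $S_i$ be the supernode containing it. Since $R_i\le 1/8$ we get $d_{G_i}(v,T_i)\le\Delta/8$, so there is $w\in T_i$ with $d_{G_i}(v,w)\le\Delta/8$; since $N_i$ is a $\Delta/8$-net of $T_i$ there is a net point $v_j\in N_i$ within distance $\Delta/8$ of $w$ (in $T_i$, hence also in $G_i$, as $G_i$-distances are no larger). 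By the triangle inequality $d_{G_i}(v,v_j)\le\Delta/4\le\alpha_j\Delta$ using $\alpha_j\ge 1/4$, and a shortest $v$–$v_j$ path realizing this stays inside $G_i$, so $v\in B_{G_i}(v_j,\alpha_j\Delta)$. Hence when the cluster around $v_j$ is created in \texttt{Create-Balls}, $v$ is either already in $\cup P$ or is added to this cluster; either way $v$ is covered. Since each cluster is $B_{G_i}(v_j,\alpha_j\Delta)\setminus\cup P$, i.e.\ with all previously clustered vertices removed, the clusters are pairwise disjoint, and together with the covering property just shown, $P$ is a partition of $V(G)$.

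\textbf{Diameter bound.} Every cluster of $P$ is contained in some $B_{G_i}(v_j,\alpha_j\Delta)$ with $\alpha_j\le 1/2$. For any two vertices $x,y$ in it, $d_{G_i}(x,v_j),d_{G_i}(y,v_j)\le\alpha_j\Delta\le\Delta/2$, so $d_{G_i}(x,y)\le\Delta$; since $G_i$ is a subgraph of $G$, distances only shrink, giving $d_G(x,y)\le d_{G_i}(x,y)\le\Delta$. Thus $\diam(C)\le\Delta$ for every $C\in P$, and $P$ is $\Delta$-bounded.

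\textbf{Main obstacle.} There is no real difficulty beyond bookkeeping; the one point that requires care is consistently tracking which graph $G_i$ each distance is measured in, invoking monotonicity of shortest-path distances under taking subgraphs ($d_G\le d_{G_i}$), and noting that the paths realizing $d_{G_i}(v,v_j)$ stay inside $G_i$ so that $v$ is genuinely reachable from $v_j$ at the time its ball is grown. (Note this claim uses only that $R_i\le 1/8$ and $\alpha_j\in[1/4,1/2]$, not the exponential nature of these distributions, which is needed only for the padding analysis.)
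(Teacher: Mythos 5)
Your proof is correct and follows essentially the same route as the paper's: the $\Delta/8$ buffer plus $\Delta/8$ net gives $d_{G_i}(x,v_j)\le\Delta/4\le\alpha_j\Delta$, so every deleted vertex is covered, and $\alpha_j\le 1/2$ together with $d_G\le d_{G_i}$ gives the $\Delta$ diameter bound. The extra details you supply (disjointness of the $S_i$'s and of the clusters, termination) are fine and only make explicit what the paper leaves implicit.
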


\begin{proof}
  First we prove that we generate a partition. Indeed, we delete
  supernodes from the graph, and recurse on the remaining components, so
  we need to show that vertices within the supernodes are contained in
  some cluster. Consider a vertex $x$ in supernode $S_i$. By definition,
  $d_{G_i}(x,T_i) \leq \Delta/8$. Since $N_i$ is a $\Delta/8$-net in
  $T_i$, some net point $v_j \in N_i$ satisfies $d_{G_i}(x,v_j) \leq
  \Delta/4$. And since $\alpha_j \geq 1/4$, the ball $B_{G_i}(v_j, \alpha_j
  \Delta)$ contains $x$. Hence each vertex within the deleted supernode is
  contained in some cluster, and we get a partition of $G$.  Moreover,
  each cluster is a ball of radius at most $\alpha_j \Delta \leq \Delta/2$
  (and hence diameter at most $\Delta$) in $G_i$. Finally, distances in
  $G_i$ are no smaller than those in $G$.
\end{proof}

\begin{lemma}
  \label{lem:weak-cut-prob}
  For $r \geq 4$, and any $\gamma\le 1/160$, the probability that a ball $B_z$ of radius
  $\gamma\Delta$ is cut by the above process is
  \[ \Pr[ B_z~ \mathrm{cut}] \leq 1 - e^{-320r\gamma}. \]
\end{lemma}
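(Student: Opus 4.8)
The plan is to apply the two lemmas from Section~\ref{sec:tools} with the right parameters. The algorithm has two phases: the supernode-building phase (Algorithm~1) is a \emph{threatening} skeleton-process, and the ball-creating phase (Algorithm~2) is a \emph{cutting} skeleton-process. So first I would identify the parameters. For the threatening process the skeletons are the trees $T_i$, the buffers are $S_i = B_{G_i}(T_i, R_i\Delta)$ with $R_i \sim \Texp_{[0,1/8]}(16r)$, so this is a skeleton-process with $l=0$, $u=1/8$, and $b/(u-l) = 16r$, i.e.\ $b = 2r$. Comparing to the hypothesis of \lemmaref{lem:threat}, which wants $b = 2s$, we read off $s = r$. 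The ``see at most $s$ buffers'' hypothesis $|{\cal K}_{|C_i}| \le s$ is exactly \corollaryref{cor:ler1} (which gives $|{\cal S}_{|C}| \le r$ via \invref{inv:sim1} and the $K_{r+1}$-minor exclusion). Hence \lemmaref{lem:threat} applies and gives, for the padding parameter $\gamma$,
\[
\E[|\threate_z|] \le 3 e^{(2r+1)(1 + \gamma/u)} = 3 e^{(2r+1)(1 + 8\gamma)}.
\]
For $\gamma \le 1/40$ and $r \ge 4$ this is a bounded quantity; I would just record the crude bound $\tau := \E[|\threate_z|] \le 3 e^{(2r+1)(1+8\gamma)}$ and push it into the next step.

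Next I would analyze the cutting process, Algorithm~2. Here the skeletons are the net points $v_j$, the buffers are the balls $B_{G_i}(v_j,\alpha_j\Delta)$ with $\alpha_j \sim \Texp_{[1/4,1/2]}(20r)$, so this is a cutting skeleton-process with $l = 1/4$, $u = 1/2$, and $b/(u-l) = 20r$, i.e.\ $b = 5r$. One subtlety: \lemmaref{lem:cut} is phrased with $\threat_z = \{A_i : d_{G_i}(z,A_i) \le (u+\gamma)\Delta\}$, the set of \emph{net points} close to $z$, whereas \lemmaref{lem:threat} bounds the number of close \emph{trees}. I would bridge these by noting that each tree $T_i$ consists of at most $r$ shortest paths in $G_i$ (\corollaryref{cor:ler1}), and a $\Delta/8$-net on a shortest path of a graph with metric $d_{G_i}$ within distance $(u+\gamma)\Delta \le (1/2 + 1/40)\Delta$ of $z$ contributes only $O(1)$ net points within the relevant range of $z$ (a net on a single geodesic segment is essentially a line, so $O(1)$ points per $\Theta(\Delta)$ length). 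Thus $\E[|\threat_z|] \le O(r) \cdot \E[|\threate_z|] \le O(r)\cdot 3 e^{(2r+1)(1+8\gamma)}$ for the cutting process; call this $\tau$.

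Now I would invoke \lemmaref{lem:cut} directly: with $l = 1/4$, $u = 1/2$, $b = 5r$, it gives that $B_z$ is cut with probability at most
\[
(1-\delta)\left(1 + \frac{\tau}{e^{5r}-1}\right), \qquad \delta = e^{-2b\gamma/(u-l)} = e^{-40r\gamma}.
\]
The point is that $\tau$ is at most (roughly) $e^{(2r+1)(1+8\gamma)} \cdot \mathrm{poly}(r)$, while the denominator is $e^{5r}-1$; since $2r+1 + 8\gamma(2r+1) \le 2r + 1 + r/4 < 5r$ for $r \ge 4$ (using $\gamma \le 1/40$), the ratio $\tau/(e^{5r}-1)$ is bounded by a small constant, so $1 + \tau/(e^{5r}-1) \le 2$ say. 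Then the cut probability is at most $2(1-\delta) = 2(1 - e^{-40r\gamma}) \le 1 - e^{-80r\gamma}$, where the last inequality is the elementary fact that $2(1-x) \le 1 - x^2$ for $x \in [0,1]$ applied with $x = e^{-40r\gamma} \ge e^{-1}$... actually I would instead use $2(1-e^{-a}) \le 1 - e^{-2a}$ which holds for all $a \ge 0$ (equivalently $(1-e^{-a})^2 \ge 0$), with $a = 40r\gamma$, giving exactly the claimed bound $1 - e^{-80r\gamma}$.

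\textbf{Main obstacle.} The one place requiring genuine care is the translation between ``threatening trees'' (counted by \lemmaref{lem:threat}) and ``threatening net points'' (counted by \lemmaref{lem:cut}), and making the constants line up so that $\tau/(e^{5r}-1)$ is genuinely small. This forces the precise choices $\Texp_{[0,1/8]}(16r)$ in Algorithm~1 versus $\Texp_{[1/4,1/2]}(20r)$ in Algorithm~2 and the restriction $\gamma \le 1/40$: one needs the exponent $(2r+1)(1 + 8\gamma) + O(\log r)$ from the threatener bound to stay comfortably below the $5r$ coming from the cutting-process rate, and one needs $u+\gamma \le 1/2 + \text{something small}$ so net points near $z$ are indeed $O(1)$ per geodesic and do not blow up the count. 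I would expect the bulk of the write-up to be verifying these numerical inequalities carefully (all of which hold with room to spare for $r \ge 4$), rather than any conceptual difficulty, since the structural heavy lifting is already done in \invref{inv:sim1}, \corollaryref{cor:ler1}, and the two lemmas of Section~\ref{sec:tools}.
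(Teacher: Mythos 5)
Your overall route is the same as the paper's: identify Algorithm~1 as a threatening skeleton-process with $s=r$ and apply \lemmaref{lem:threat}; convert the bound on threatening trees into a bound on threatening net points by noting that each tree is $\le r$ shortest paths and that a $\Delta/8$-net on a geodesic contributes only $O(1)$ points within distance $(1/2+\gamma)\Delta$ of $z$; then apply \lemmaref{lem:cut} to the \texttt{Create-Balls} process with $l=1/4$, $u=1/2$, $b=5r$. All of that matches the paper (the paper makes the ``$O(1)$ per path'' step precise with the triangle inequality, getting $10$ per path and hence $\tau\le 100r\,e^{5r/2}$).

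However, your final numerical step is wrong. You bound $1+\tau/(e^{5r}-1)\le 2$ and then invoke ``$2(1-e^{-a})\le 1-e^{-2a}$ for all $a\ge 0$.'' That inequality is false: $\bigl(1-e^{-2a}\bigr)-2\bigl(1-e^{-a}\bigr)=-\bigl(1-e^{-a}\bigr)^2\le 0$, so it holds with the inequality \emph{reversed} (equivalently, $2(1-x)\le 1-x^2$ is the same as $2\le 1+x$, which fails for $x<1$). For small $a$ the left side is $2a-a^2+O(a^3)$ while the right side is $2a-2a^2+O(a^3)$, so the factor of $2$ you give up by replacing $1+\tau/(e^{5r}-1)$ with $2$ cannot be recovered. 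The correct way to close the argument---and what the paper does---is to show the additive term is at most $\delta=e^{-40r\gamma}$ itself, not merely at most $1$: from $\tau\le 100r\,e^{5r/2}$ one gets $\tau/(e^{5r}-1)\le e^{-r}$ for $r\ge 4$, and $e^{-r}\le e^{-40r\gamma}$ because $\gamma\le 1/40$; then
\[
\bigl(1-e^{-40r\gamma}\bigr)\Bigl(1+\frac{\tau}{e^{5r}-1}\Bigr)\le\bigl(1-e^{-40r\gamma}\bigr)\bigl(1+e^{-40r\gamma}\bigr)=1-e^{-80r\gamma}~.
\]
Your own estimates ($\tau\le O(r)\,e^{(2r+1)(1+8\gamma)}$ with exponent at most $2.4r+1.2$) do yield $\tau/(e^{5r}-1)\le e^{-r}$ for $r\ge4$, so the fix is entirely within reach, but as written the last step of your proof does not go through.
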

\begin{proof}
  First observe that the process defined in Algorithm 1 is a threatening skeleton-process, with the sequence of
  graphs $G_0,G_1,\dots$ as defined in the algorithm and with $A_i=T_i$, $K_i=S_i$, $l=0, u=1/8$, $s=r$ and $b=2s$. Recall that
  $B_z=B_G(z,\gamma\Delta)$, and set $u'=1/2$ so that $\threate_z=\{T_i \mid d_{G_i}(z,T_i)\le
  (u'+\gamma)\Delta\}$ (we choose this $u'$ to accommodate the cutting process which will be conducted with this parameter). By \invref{inv:sim1} we get that for all $i\in\N$, $|{\cal
    S}_{|C_i}|\le r$, so by \lemmaref{lem:threat} (using that
  $\gamma\le 1/160$),
  \begin{equation}\label{eq:thr}
    \E[|\threate_z|]\le 6e^{(2r+1)\cdot(u'+\gamma)/u}\le 10e^{10r}~.
  \end{equation}
  For each $i$ such that $T_i\in\threate_z$, let $U_i=\{v\in N_i \mid
  d_{G_i}(v,z)\le (1/2+\gamma)\Delta\}$ be the net points in $N_i$ that are
  sufficiently close to threaten $B_z$ (note that by the choice of $u'$ this is indeed the case), and denote $\threat_z=\cup_{i \mid T_i\in\threate_z}  U_i$. By \corollaryref{cor:ler1}, $T_i$ is comprised of at most $r$
  shortest paths, and we claim that on each shortest path there can be
  at most $10$ points that are in $U_i$. This is because the distance
  between any two consecutive net points on a path is at least
  $\Delta/8$, and if there are $q>10$ points, because this is a shortest
  path, the distance from the first point to the last is at least
  $(q-1)\cdot\Delta/8>(1+2\gamma)\Delta$. The triangle inequality
  implies that it can't be that both are within $(1/2+\gamma)\Delta$
  from $z$. We conclude that for all
  $i$ (with $T_i\in\threate_z$) we have $|U_i|\le 10r$, thus by
  \eqref{eq:thr}
  \begin{equation}\label{eq:ut}
    \tau:=\E[|\threat_z|]\le 10r\cdot 10e^{10r}=100r\cdot e^{10r}~.
  \end{equation}
  Next, we show that our \texttt{Create-Balls} algorithm generates a cutting skeleton-process. Simply
  take the sequence $G_0,\dots,G_0,G_1,\dots,G_1,G_2,\dots$, where each
  $G_i$ is taken $|N_i|$ times. Then the skeleton sets $A$ are in fact
  singletons: for each $i$ we will take $|N_i|$ sets - the points of
  $N_i$, to be these singletons. The parameters for the exponential
  distribution are $l=1/4$, $u=1/2$ and $b=20r$. To see the cutting property of \defref{def:pro}, note that once we move from the graph $G_i$ to $G_{i+1}$, $G_{i+1}$ will contain all the
  points yet uncovered by clusters, because we already observed in \claimref{clm:weak-diam}
  that once all the points of $N_i$ create a cluster, the supernode
  $S_i$ is completely covered (recall $G_{i+1}=G_i\setminus
  S_i$). Finally, applying \lemmaref{lem:cut}, we obtain that the
  probability that $B_z$ is cut is at most
  \[
  (1-e^{-2b\gamma/(u-l)})\!\left(1\!+\!\frac{\tau}{e^b-1}\right) =
  (1-e^{-160r\gamma})\!\left(1\!+\!\frac{100r \cdot e^{10r}}{e^{20r}-1}\right).
  \]
  It holds that $\frac{100r \cdot e^{10r}}{e^{20r}-1} \leq e^{-r}$ for $r \geq
  4$, and this completes the proof as
  \[
  (1-e^{-160r\gamma})\cdot(1+e^{-r})\le(1-e^{-160r\gamma})\cdot(1+e^{-160r\gamma})= 1-e^{-320r\gamma}~,
   \]
   using that $\gamma\le 1/160$.
\end{proof}

\section{A Strong-Diameter Partition}
\label{sec:strong-diam}

In the previous section, we saw how to get a weak-diameter partition for
minor-free graphs. In this section, we give a strong-diameter guarantee
with a slightly weaker padding parameter of $(O(r^2),O(1/r^2))$ instead of
$O(r)$. However, this is still an exponential improvement over the best
previous padding for such strong-diameter partitions of minor-free graphs.

\subsection{The Algorithm}
\label{sec:strong-algo}

The algorithm for strong-diameter partitions is similar in spirit to
that of \sectionref{sec:weak-algo} for weak-diameter partitions, but
there are some crucial differences that we highlight here.

At a high level, the algorithm works as follows: in each step, pick a
connected component of the remaining graph, and find (in a specific way)
a shortest path $P$ in this component. Delete a random neighborhood of
$P$ from the graph, and recurse on each connected component of the
graph, if any. Each such random neighborhood is decomposed into small
diameter regions using cones centered at some of $P$'s points. A key
property to ensure the padding guarantee is that each node is expected
to be close to few of these paths. We show that this property holds,
otherwise we can construct a $K_{r+1}$-minor in $G$.

The algorithm again maintains a set of paths (instead of trees), and
associated supernodes that will be used in the construction. These will
be denoted as $P_{ij}$ and $S_{i}$ respectively, and supernode $S_i$
consists of the union of neighborhoods of the paths
$P_{ij}$. The main difference from the weak-diameter construction is
that instead of building a shortest-path tree all at once, we build a
``tree'' one path at a time, and remove a neighborhood of the path from
the graph before constructing the subsequent paths.

Let us describe the $i$-th iteration of the algorithm. Let ${\cal
  S}\subseteq V$ be the set containing all the supernodes created so
far. Let $C_i$ be a connected component in the graph
$G_i=G\setminus(\cup{\cal S})$. Let ${\cal S}_{|C_i} = \{S\in{\cal S}:
S\sim C_i\}$ be the set of supernodes that have a neighbor in component
$C_i$. We pick an arbitrary vertex $u_i$ from $C_i$ and build a
supernode $S_i$. Again, the intuition behind the construction is that we
wish for the new supernode to ``touch'' every supernode $S\in{\cal
  S}_{|C_i}$ (i.e., $S_i\sim S$). However, this is done slightly
differently from \sectionref{sec:weak-algo}. At the
first iteration ($j=1$) we create a shortest path $P_{ij}$ from $u_i$ to
some supernode $S\in{\cal S}_{|C_i}$, and remove a random neighborhood
$S_{ij}$ from the graph to obtain $G_{i(j+1)}$.  This neighborhood
$S_{ij}$ is defined as all the vertices within distance
$R_{ij}\cdot\Delta$ of $P_{ij}$ (in the current component $C_{ij}$),
where $R_{ij}\propto\Texp_{[0,1/4]}(8(r^2+r))$. We increase the iteration
counter $j$ and continue in this manner on every connected component of
$G_{ij}$ that is contained in $C_i$, until the new supernode
$S_i=\cup_jS_{ij}$ touches every supernode $S\in{\cal S}_{|C}$ for every
connected component $C\subseteq C_i$ in the remaining graph
$G_{ij}$.

Finally, each such neighborhood $S_{ij}$ is partitioned to
``cones''. Each cone $B$, centered at some (yet uncovered) point $c\in
P_{ij}$, consists of the (yet uncovered) points in $S_{ij}$ whose
distance to $c$ is not ``much larger'' than their distance to $P_{ij}$.
The notion of being ``much larger'' is determined by a random variable
$\alpha$ drawn independently and uniformly from $[\Delta/8,\Delta/4]$.
The algorithms are formally presented as Algorithms~3 and~4
respectively. Observe that the subroutine \texttt{Create-Cones} is
invoked in line~13 of \texttt{Strong-Random-Partition}.

\begin{algorithm}\label{alg:aprime}
\caption{\texttt{Strong-Random-Partition}($G$,$\Delta$,$r$)}
\begin{algorithmic}[1]
\STATE Let $G_0 \leftarrow G$, $i\leftarrow 0$.
\STATE Let $\mathcal{S} \leftarrow \emptyset$.
\STATE Let $\mathcal{C} \leftarrow \emptyset$.
\WHILE {$G_i$ is non-empty}
\STATE Select a connected component $C_i$ of $G_i$, and pick $u_i\in C_i$.
\STATE Let $W=\{u_i\}$.
\STATE Let $j=1$ and $G_{ij}=G_i\setminus W$.
\WHILE {there exist a connected component $C_{ij}$ in $G_{ij}$ and a supernode $S\in{\cal S}_{|C_{ij}}$ such that $C_{ij}\sim S$ and $C_{ij}\sim W$ but $W\nsim S$}
\STATE Choose $u\in N(W)\cap C_{ij}$.
\STATE Let $P_{ij}$ be a shortest path (in $G_{ij}$) from $u$ to $N(S)$.
\STATE Let $R_{ij}$ be a random variable drawn independently from the distribution $\Texp_{[0,1/4]}(8(r^2+r))$.
\STATE Let $S_{ij} \leftarrow B_{G_{ij}}(P_{ij},R_{ij}\Delta)$ be a neighborhood of $P_{ij}$.
\STATE \texttt{Create-Cones}($S_{ij}$,$P_{ij}$,$\cC$).
\STATE $W\leftarrow W\cup S_{ij}$.
\STATE $G_{i(j+1)} \leftarrow G_{ij}\setminus S_{ij}$.
\STATE $j\leftarrow j+1$.
\ENDWHILE
\STATE Set $S_i=W$, and add $S_i$ to ${\cal S}$.
\STATE $G_{i+1} \leftarrow G_i \setminus S_i$.
\STATE $i\leftarrow i+1$.
\ENDWHILE
\RETURN ${\cal C}$
\end{algorithmic}
\end{algorithm}

\begin{algorithm}\label{alg:cone}
\caption{\texttt{Create-Cones}($S$,$P$,${\cal C}$)}
\begin{algorithmic}[1]
\WHILE {$P\neq \emptyset$}
\STATE Choose $c\in P$.
\STATE Choose $\alpha\in[1/8,1/4]$ uniformly at random.
\STATE Let $B=\{u\in S \mid d_S(u,c)-d_S(u,P)\le \alpha\Delta\}$.
\STATE Add $B$ to $\cC$.
\STATE Set $S\leftarrow S \setminus B$.
\STATE Set $P\leftarrow P \setminus B$.
\ENDWHILE

\end{algorithmic}
\end{algorithm}

\subsection{The Analysis}

We begin by arguing that the algorithm creates a partition ${\cal C}$ with strong-diameter $\Delta$. The following properties will be useful.
\begin{proposition}\label{prop:cone}
For any $S$ and $P$ obtained during the run of the algorithm \texttt{Create-Cones}:
\begin{itemize}
\item If $u,v\in S$ are such that a shortest path from $u$ to $P$ contains $v$, and $v\in B$ for a cone $B$, then also $u\in B$.

\item If $u,v\in S$ are such that a shortest path from $u$ to $c$ contains $v$, and $u\in B$ for a cone $B$ centered at $c$, then also $v\in B$.
\end{itemize}

\end{proposition}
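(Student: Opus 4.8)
The plan is to verify both bullets directly from the definition of a cone $B=\{u\in S \mid d_S(u,c)-d_S(u,P)\le \alpha\Delta\}$, using the key fact that if $w$ lies on a shortest path (in $G[S]$) between two vertices then distances along that path decompose additively. I will treat the two items in turn, since they are symmetric in flavour but not identical; in both cases the point is that the quantity $\phi(x):=d_S(x,c)-d_S(x,P)$ behaves monotonically as one walks along the relevant shortest path, and membership in $B$ is exactly the condition $\phi(x)\le\alpha\Delta$.

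First I would handle the second bullet, which is cleaner. Suppose $u\in B$ for a cone centered at $c$, and $v$ lies on a shortest path (in $G[S]$) from $u$ to $c$. Then $d_S(u,c)=d_S(u,v)+d_S(v,c)$, so $d_S(v,c)=d_S(u,c)-d_S(u,v)\le d_S(u,c)$. On the other hand, by the triangle inequality $d_S(v,P)\ge d_S(u,P)-d_S(u,v)$. Combining, $\phi(v)=d_S(v,c)-d_S(v,P)\le \big(d_S(u,c)-d_S(u,v)\big)-\big(d_S(u,P)-d_S(u,v)\big)=d_S(u,c)-d_S(u,P)=\phi(u)\le\alpha\Delta$, so $v\in B$. (Here I should note that $v\in S$ because the shortest path from $u$ to $c$ is inside $G[S]$; all three of $u$, $v$, $c$ lie in $S$, and the distances are measured in $G[S]$. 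I also need that $c$ itself is still in $S$ and in $P$ at the time the cone is carved, which is how the algorithm picks $c$.)

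For the first bullet, suppose a shortest path (in $G[S]$) from $u$ to $P$ contains $v$, and $v\in B$. Let $p\in P$ be the endpoint of that shortest path realizing $d_S(u,P)$, so $d_S(u,P)=d_S(u,p)=d_S(u,v)+d_S(v,p)$, and in particular $d_S(v,P)\le d_S(v,p)=d_S(u,P)-d_S(u,v)$. Meanwhile $d_S(u,c)\le d_S(u,v)+d_S(v,c)$ by the triangle inequality. Hence $\phi(u)=d_S(u,c)-d_S(u,P)\le \big(d_S(u,v)+d_S(v,c)\big)-\big(d_S(u,v)+d_S(v,P)\big)=d_S(v,c)-d_S(v,P)=\phi(v)\le\alpha\Delta$, which gives $u\in B$. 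Again one checks $u\in S$ since the shortest path from $u$ to $P$ witnessing the hypothesis lies in $G[S]$, and $v\in S$ likewise.

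The main thing to be careful about — really the only subtlety — is bookkeeping about \emph{which} graph the distances live in and \emph{when} in the execution of \texttt{Create-Cones} the cone $B$ is formed: the set $S$ shrinks as cones are removed, so ``$d_S$'' refers to $G[S]$ for the current $S$, and the statement is about a single snapshot $(S,P)$ during the run. As long as one fixes that snapshot, the argument above is just the additivity of shortest-path distances through an intermediate vertex plus the triangle inequality, and there is no real obstacle; the inequalities are self-contained once $v$ (resp.\ $u$) is confirmed to lie in the current $S$, which is immediate from the hypothesis that a shortest path in $G[S]$ passes through it.
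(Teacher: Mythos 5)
Your proof is correct and follows essentially the same route as the paper's: both items are verified directly from the cone definition by decomposing distances additively along the given shortest path and applying the triangle inequality to the other term, showing the quantity $d_S(\cdot,c)-d_S(\cdot,P)$ can only decrease in the relevant direction. Your extra care in the first item (using $d_S(v,P)\le d_S(v,p)$ rather than asserting equality outright) and the remarks about the snapshot $(S,P)$ are fine but not a substantive departure.
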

\begin{proof}
Let $c\in P$ be the center of the cone $B$. We begin by proving the first item: Since $v\in B$ we have that $d_S(v,c)-d_S(v,P)\le \alpha\Delta$. Since $v$ is on the shortest path from $u$ to $P$, $d_S(u,P)=d_S(u,v) +d_S(v,P)$ and thus
\begin{eqnarray*}
d_S(u,c)-d_S(u,P) &\le& (d_S(u,v)+d_S(v,c))-(d_S(u,v) +d_S(v,P))\\
&=& d_S(v,c)-d_S(v,P)\\
&\le& \alpha\Delta~,
\end{eqnarray*}
which implies that $u\in B$.

The second item is proved in a similar manner: Since $u\in B$ we have that $d_S(u,c)-d_S(u,P)\le \alpha\Delta$. Since $v$ is on the shortest path from $u$ to $c$, $d_S(v,c)=d_S(u,c)-d_S(u,v)$ and thus
\begin{eqnarray*}
d_S(v,c)-d_S(v,P) &\le& (d_S(u,c)-d_S(u,v))-(d_S(u,P) -d_S(u,v))\\
&=& d_S(u,c)-d_S(u,P)\\
&\le& \alpha\Delta~,
\end{eqnarray*}
which implies that $v\in B$.
\end{proof}

\begin{lemma}\label{lem:strong}
Each cone $B$ created in the algorithm has $\diam(G[B])\le\Delta$.
\end{lemma}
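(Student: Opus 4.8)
The plan is to show that any two points $u, v$ in a cone $B$ centered at $c \in P$ are connected within $G[B]$ by a path of length at most $\Delta$, by routing both of them toward the center $c$ along shortest paths in $G[S]$ and invoking the second item of \propref{prop:cone} to certify that these paths stay inside $B$. First I would fix a cone $B$ with center $c$, recalling that $B = \{u \in S : d_S(u,c) - d_S(u,P) \le \alpha\Delta\}$ with $\alpha \le 1/4$, and that the enclosing supernode $S = S_{ij}$ is a neighborhood $B_{G_{ij}}(P_{ij}, R_{ij}\Delta)$ with $R_{ij} \le 1/4$. The key consequence of the latter is that every point $w \in S$ satisfies $d_S(w, P) \le \Delta/4$: indeed $d_{G_{ij}}(w, P_{ij}) \le R_{ij}\Delta \le \Delta/4$, and a shortest path in $G_{ij}$ from $w$ to $P_{ij}$ realizing this distance lies entirely within the ball $S$ (every prefix of it is still within distance $R_{ij}\Delta$ of $P_{ij}$), so the distance measured inside $G[S]$ is the same.

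Next I would bound $d_S(u, c)$ for $u \in B$. From the cone definition, $d_S(u,c) \le \alpha\Delta + d_S(u,P) \le \Delta/4 + \Delta/4 = \Delta/2$. So every point of $B$ is within $G[S]$-distance $\Delta/2$ of $c$. It remains to upgrade "within $G[S]$" to "within $G[B]$": take a shortest path $Q$ in $G[S]$ from $u$ to $c$. For any intermediate vertex $v$ on $Q$, the subpath from $u$ to $v$ is a shortest $u$–$v$ path in $G[S]$ and $v$ lies on the shortest $u$–$c$ path, so the second bullet of \propref{prop:cone} (applied with this $S$, $P$, and cone $B$) gives $v \in B$. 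Hence $Q \subseteq B$, so $d_{G[B]}(u,c) = d_S(u,c) \le \Delta/2$. By the triangle inequality in $G[B]$, $d_{G[B]}(u,v) \le d_{G[B]}(u,c) + d_{G[B]}(c,v) \le \Delta$ for any $u, v \in B$, which is exactly $\diam(G[B]) \le \Delta$.

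The main thing to be careful about — the only real obstacle — is the interplay between the three different metrics in play: $d_{G_{ij}}$ (the component in which the neighborhood $S_{ij}$ was carved out), $d_S$ with $S = S_{ij}$ (used throughout \texttt{Create-Cones} and in \propref{prop:cone}), and $d_{G[B]}$ (the one the lemma is about). One must check that shortest paths witnessing $d_S(\cdot,\cdot)$ genuinely stay inside $S$ so that $d_S$ equals the true induced-subgraph distance, and that \propref{prop:cone} is being applied with respect to the same $S, P$ that \texttt{Create-Cones} currently holds (the proposition is stated for the $S$ and $P$ "obtained during the run", so it applies to each intermediate pair). Provided one is consistent about which $S$ and $P$ are meant — namely the original $S_{ij}$ and $P_{ij}$, since \propref{prop:cone}'s first bullet guarantees cones are "downward closed" toward $P$ and so removing earlier cones does not change which points of the remaining $S$ and $P$ a later cone captures — the argument goes through as above.
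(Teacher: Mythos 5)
Your proof is correct and follows essentially the same route as the paper's: bound $d_S(u,P)\le\Delta/4$ (using the first item of \propref{prop:cone} to keep witnessing paths inside the shrinking $S$), deduce $d_S(u,c)\le\Delta/2$, and use the second item to show the shortest $u$--$c$ path stays in $B$, then apply the triangle inequality through $c$. The only quibble is your parenthetical claim that later cones are unaffected by earlier removals --- $d_S(u,c)$ can only grow as $S$ shrinks, so the cones need not coincide with what the original $S_{ij}$ would produce --- but this does not matter since your main argument, like the paper's, works entirely with the current $S$ and $P$.
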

\begin{proof}
Recall that each neighborhood $S$ of a shortest path $P$ contains points within distance at most $\Delta/4$ from $P$. Let $S$ be the remaining part after some cones have been created, and $P$ is the remaining path. The first property in
\propref{prop:cone} implies that the shortest path from any $u\in S$ to $P$ is fully contained in $S$, and thus
\begin{equation}\label{eq:hf}
d_{S}(u,P)\le \Delta/4~.
\end{equation}
Consider a certain cone $B$ centered at $c\in P$, and by definition of $B$, for each $u\in B$,
\begin{equation}\label{eq:dd}
d_{S}(u,c)\le \alpha\Delta+d_{S}(u,P)\stackrel{\eqref{eq:hf}}{\le} \Delta/4+\Delta/4=\Delta/2~.
\end{equation}
By the second property of \propref{prop:cone}, if $u\in B$ then surely any $v\in S$ on the shortest path from $u$ to $c$ will also be in $B$, so $d_B(u,c)\le\Delta/2$ as well, and thus $\diam(G[B])\le\Delta$.

Finally, it remains to see that \texttt{Create-Cones} generates a partition of $S$ (i.e. that the clusters it creates cover $S$), and this can be verified by the first property of \propref{prop:cone}. If for $u\in S$ there is a shortest path from $u$ to $P$ ending at $v\in P$, then whenever $v$ is covered by a cone, $u$ must be covered as well (the algorithm does not stop until $P=\emptyset$).
\end{proof}

For a time step $i$, we say that $W$ is the {\em working supernode}, and at the end of this step it will become the supernode $S_i$. Note that $W$ induces a connected subgraph, because we always choose a vertex $u$ in $N(W)$ to be a start of the next path. We denote by $G_{i0}=G_i$. The following invariant holds for each time step $i$:
\begin{invariant}\label{inv:sim}
For every $i,j\ge 0$, every connected component $C$ of $G_{ij}$ satisfies that if $S,S'\in{\cal S}_{|C}$ then $S\sim S'$.
\end{invariant}
\begin{proof}
Assume inductively that the invariant holds until time step $i$ at iteration $j$. First consider the case $j>0$, then as $G_{ij}$ is obtained from $G_{i(j-1)}$ by removing some vertices, and the set of supernodes remains unchanged, the invariant will still hold: Every connected component $C$ of $G_{ij}$ is a subset of a connected component $D$ of $G_{i(j-1)}$, in particular ${\cal S}_{|C}\subseteq {\cal S}_{|D}$, and so any pair of supernodes $S,S'\in{\cal S}_{|C}$ is also in ${\cal S}_{|D}$ and thus $S\sim S'$.

For the case $j=0$, a new supernode $S_{i-1}$ was just introduced, but the termination condition of line 9. guarantees that for any connected component $C$ in $G_i$, any supernode $S\in{\cal S}_{|C}$ must have $S\sim S_{i-1}$.
\end{proof}

\begin{corollary}\label{cor:ler}
If $G$ excludes $K_{r+1}$ as a minor, then for every time step $i$ and iteration $j$, the connected component $C_{ij}$ has $|{\cal S}_{|C_{ij}}|\le r$. Moreover, fix some $z\in V$. If $P_{i1},\dots,P_{il}$ are the shortest paths chosen while creating $S_i$ in the components containing $z$, then $l\le r$.
\end{corollary}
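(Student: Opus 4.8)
The plan is to mirror the argument that established Corollary \ref{cor:ler1} in the weak-diameter setting, but now using Invariant \ref{inv:sim} in place of Invariant \ref{inv:sim1}. First I would fix a time step $i$ and iteration $j$ and look at the connected component $C_{ij}$ together with the set of supernodes $\mathcal{S}_{|C_{ij}}$ touching it. By Invariant \ref{inv:sim}, any two supernodes $S,S'\in\mathcal{S}_{|C_{ij}}$ satisfy $S\sim S'$, i.e. there is an edge between them. Now contract each supernode $S\in\mathcal{S}_{|C_{ij}}$ to a single vertex (each supernode is connected, so this is a legal minor operation), and also contract the whole of $C_{ij}$ to a single vertex (also legal, since $C_{ij}$ is connected). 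Since every $S\in\mathcal{S}_{|C_{ij}}$ has a neighbor in $C_{ij}$ and every pair $S,S'$ has an edge between them, after these contractions the images of the supernodes together with the image of $C_{ij}$ form a clique on $|\mathcal{S}_{|C_{ij}}|+1$ vertices, witnessing a $K_{|\mathcal{S}_{|C_{ij}}|+1}$ minor of $G$. Since $G$ is $K_{r+1}$-free, we get $|\mathcal{S}_{|C_{ij}}|\le r$.

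For the ``moreover'' part, fix $z\in V$ and consider the paths $P_{i1},\dots,P_{il}$ created while building $S_i$ along the chain of components that contain $z$. Each new path $P_{ij}$ is constructed precisely because the inner while-loop (line 8) found a component $C_{ij}$ containing a supernode $S\in\mathcal{S}_{|C_{ij}}$ that the working supernode $W$ does not yet touch; the path $P_{ij}$ goes from a vertex of $N(W)\cap C_{ij}$ to $N(S)$, and the removed neighborhood $S_{ij}$ then makes $W\sim S$. So each iteration $j$ ``serves'' a distinct supernode of $\mathcal{S}_{|C_i}$ — once $W\sim S$ it stays that way (removing vertices in later iterations cannot destroy an existing edge between two retained vertex sets $W$ and $S$). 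Restricting to the iterations $j$ whose component $C_{ij}$ contains $z$, each such iteration serves a supernode in $\mathcal{S}_{|C'}$ for a component $C'\subseteq C_i$ containing $z$, and by the first part $|\mathcal{S}_{|C'}|\le r$; since distinct iterations serve distinct supernodes, $l\le r$.

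The step that needs the most care is the claim that distinct iterations (restricted to components containing $z$) serve distinct supernodes, and that the count of served supernodes is bounded by $r$ at a \emph{single} component rather than summed over the whole run. The subtlety is that as $W$ grows and vertices are deleted, the component structure changes: the component $C_{ij}$ containing $z$ at iteration $j$ may be a strict subset of the one at iteration $j-1$. One has to check that a supernode served at an earlier iteration in a component $\supseteq$ the current one is still counted in $\mathcal{S}_{|C_{ij}}$ only through the $S\sim S'$ relation and hence contributes to the same $K_{|\mathcal{S}_{|C_{ij}}|+1}$ bound — more directly, one argues that each iteration strictly enlarges the set $\{S\in\mathcal{S}_{|C_i} : W\sim S\}$ by a new supernode, this set only grows, and it is contained in $\mathcal{S}_{|C_i}$ which has size $\le r$ by the first part applied to $C_{i0}=C_i$. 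That gives $l\le r$ directly. I would present the argument in this cleaner form, leaning on the first part of the corollary and on the monotonicity of the relation $W\sim S$ under vertex deletions, rather than tracking $z$'s component through every iteration.
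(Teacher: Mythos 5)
Your proof is correct and follows essentially the same route as the paper's: the first part contracts the mutually adjacent supernodes of $\mathcal{S}_{|C_{ij}}$ (Invariant~\ref{inv:sim}) into a clique minor, and the second part observes that each inner iteration makes $W$ adjacent to a previously unserved supernode of $\mathcal{S}_{|C_i}$, of which there are at most $r$. Your two refinements --- also contracting $C_{ij}$ to get a $K_{q+1}$ minor, and spelling out that $W\sim S$ is monotone under growing $W$ so distinct iterations serve distinct supernodes --- only make the paper's terser argument more explicit.
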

\begin{proof}
If $|{\cal S}_{|C_{ij}}|=q$, then using \invref{inv:sim}, contracting each supernode in ${\cal S}_{|C_{ij}}$ will yield a $K_q$ minor, so it must be that $q\le r$. To see the second part of the assertion, note that each $P_{ij}$ will connect the component containing $z$ with some supernode $S\in {\cal S}_{|C_{ij}}$, so that $S_{ij}\sim S$. Finally, as $|{\cal S}_{|C_{ij}}|\le r$, there can be at most $r$ such paths.
\end{proof}

\begin{lemma}
  \label{lem:strong-cut-prob}
  For $\gamma\le 1/r^2$, the probability that a ball $B_z$ of radius
  $\gamma\Delta$ is cut by the above process is
  \[ \Pr[ B_z \mbox{ is cut}] \leq O(\gamma r^2)~. \]
\end{lemma}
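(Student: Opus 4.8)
The plan is to mirror the structure of the weak-diameter argument (Lemma~\ref{lem:weak-cut-prob}), showing first that \texttt{Strong-Random-Partition} yields a threatening skeleton-process to which Lemma~\ref{lem:threat} applies, and then that \texttt{Create-Cones} (together with the path-by-path neighborhood removals) yields a cutting skeleton-process to which Lemma~\ref{lem:cut} applies. For the threatening process, the natural choice is $A_{ij}=P_{ij}$, $K_{ij}=S_{ij}$, $l=0$, $u=1/4$, and $b=2s$ with $s=r^2+r$ — the latter because, within a single outer step $i$, up to $r$ paths $P_{i1},\dots,P_{il}$ (by Corollary~\ref{cor:ler}) are created, and the ``see at most $s$ buffers'' bound must account for all buffers $S_{i'j'}$ adjacent to a component, across the $\le r$ supernodes each contributing $\le r$ sub-buffers; I would extract from Invariant~\ref{inv:sim} and Corollary~\ref{cor:ler} the precise statement that $|\mathcal{K}_{|C}|\le r^2+r$ for every component $C$ arising in the process (the $+r$ absorbs the sub-buffers of the working supernode $W$ itself). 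Feeding this into Lemma~\ref{lem:threat} with $\gamma\le 1/r^2$ gives $\E[|\threate_z|]\le 3e^{(2(r^2+r)+1)(1+\gamma/u)}= e^{O(r^2)}$.

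Next I would pass from threatening skeletons to the actual threateners $\threat_z$ used in the cutting lemma. The cones are centered at net-like points of the paths $P_{ij}$; I would argue (exactly as in the weak case, via the triangle inequality and the fact that consecutive cone centers on a shortest path are $\Omega(\Delta)$ apart because each cone has radius $\ge\Delta/8$ along the path) that each $P_{ij}$ contributes only $O(1)$ cone-centers that can reach $B_z$, so $|\threat_z|$ is within an $O(r)\cdot O(1)=O(r)$ factor of $|\threate_z|$, giving $\tau:=\E[|\threat_z|]\le e^{O(r^2)}$. Then I would verify the cutting property: once all cones carved out of a given $S_{ij}$ are created, $S_{ij}$ is entirely covered (Lemma~\ref{lem:strong}), so listing the graphs $G_{ij}$ with the appropriate multiplicities produces a sequence satisfying $G_{i(j+1)}\supseteq G_0\setminus(\bigcup K)$; the cone radius parameter $\alpha\in[1/8,1/4]$ means the relevant truncated-exponential parameters are $l=1/8$, $u=1/4$, and I would take $b=\Theta(r)$ (tuned so that $\tau/(e^b-1)=O(1)$, which forces $b=\Theta(r^2)$ here since $\tau=e^{O(r^2)}$ — this is the place where the padding degrades to $O(r^2)$). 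Finally, Lemma~\ref{lem:cut} gives $\Pr[B_z\text{ cut}]\le (1-\delta)(1+\tau/(e^b-1))$ with $\delta=e^{-2b\gamma/(u-l)}=e^{-O(r^2\gamma)}$; since $\tau/(e^b-1)=O(1)$ and $1-\delta=O(r^2\gamma)$ for $\gamma\le 1/r^2$, this is $O(\gamma r^2)$ as claimed.

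The main obstacle — and the step I would spend the most care on — is the ``sees at most $s$ buffers'' bookkeeping for the strong-diameter process. Unlike the weak case, where a whole shortest-path tree $T_i$ and its single buffer $S_i$ are removed atomically, here the supernode $S_i$ is built incrementally as $\bigcup_j S_{ij}$, and intermediate graphs $G_{ij}$ contain components bounded by a mix of completed supernodes from $\mathcal{S}$ and the sub-buffers $S_{i1},\dots,S_{i,j-1}$ of the partially-built working supernode. I need the invariant that a component $C_{ij}$ touches at most $r$ members of $\mathcal{S}$ (Corollary~\ref{cor:ler}) \emph{and} at most $r$ of the working sub-buffers, and that these are exactly the buffers counted in $\mathcal{K}_{|C_{ij}}$; getting the constant right (and confirming $s=r^2+r$ rather than something larger) requires carefully tracking which $S_{i'j'}$ with $(i',j')$ preceding $(i,j)$ can be adjacent to $C_{ij}$, using that once outer step $i'<i$ finishes, its sub-buffers have been merged and are represented by the single supernode $S_{i'}\in\mathcal{S}$. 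Everything else — the geometric counting of cone-centers near $z$, the cutting property, and the final arithmetic — is routine and parallels Section~\ref{sec:weak-diam}.
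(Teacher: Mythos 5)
Your handling of the threatening process matches the paper's: same skeletons $P_{ij}$, buffers $S_{ij}$, parameters $l=0$, $u=1/4$, $s=r^2+r$, and the same bookkeeping via \invref{inv:sim} and \corollaryref{cor:ler} giving $|{\cal K}_{|C_{ij}}|\le r^2+r$ and hence $\tau=e^{O(r^2)}$. The gap is in the cutting half. You propose to make \texttt{Create-Cones} the cutting skeleton-process, with cone centers as skeletons, $l=1/8$, $u=1/4$, and $b=\Theta(r^2)$. This fails on two counts. First, the cones do not fit \defref{def:pro}: a cone $\{u\in S: d_S(u,c)-d_S(u,P)\le\alpha\Delta\}$ is not a ball $B_{G}(A,R\Delta)$, and $\alpha$ is drawn \emph{uniformly} from $[1/8,1/4]$, not from a truncated exponential; \lemmaref{lem:cut}, whose proof in \appendixref{app:cut} uses both facts, therefore does not apply to this stage. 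Second, and more fundamentally, the only reason \lemmaref{lem:cut} tolerates $\tau=e^{O(r^2)}$ threateners is the $e^{b}-1$ denominator coming from the exponential tail; with the uniform radii the algorithm actually uses, a per-center cut probability of $O(\gamma)$ union-bounded over the $e^{O(r^2)}$ expected threatening cone centers yields a useless bound, and your fix of taking $b=\Theta(r^2)$ amounts to silently changing the algorithm.

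The paper's route is different and you should adopt it: the \emph{same} $S_{ij}$-process is declared to be the cutting process as well (skeletons $P_{ij}$, parameters $l=0$, $u=1/4$, $b=2s$), so \lemmaref{lem:cut} directly bounds the probability that $B_z$ meets more than one buffer $S_{ij}$ by $(1-e^{-20r^2\gamma})(1+9e^{10r^2\gamma})=O(\gamma r^2)$. Then, \emph{conditioned on} $B_z\subseteq S_{ij}$ for a single $(i,j)$, the cones are handled by a direct argument: since each cone contains a ball of radius $\Delta/8$ along the shortest path $P_{ij}$ and reaches only $\Delta/2$ from its center, at most $9$ centers can threaten $B_z$ in the \emph{worst case}, and each cuts it with probability at most $2\gamma\Delta/(\Delta/8)=16\gamma$ because $\alpha$ is uniform on an interval of length $\Delta/8$; a union bound gives $144\gamma$, which is absorbed into $O(\gamma r^2)$. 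The key idea your proposal is missing is this deterministic $O(1)$ bound on the number of threatening cone centers once $B_z$ is known to lie inside a single buffer, which is what lets the uniform distribution suffice there.
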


\begin{proof}
  First observe that our algorithm is a threatening skeleton-process with parameters
  $l=0$, $u=1/4$, $s=r^2+r$, $b=2s$ and the $G_i$ (respectively $A_i$, $K_i$) are the
  $G_{ij}$ (resp. $P_{ij}$, $S_{ij}$) ordered lexicographically. By
  \invref{inv:sim} we get that for all $i,j\in\N$, $|{\cal
    S}_{|C_{ij}}|\le r$. By \corollaryref{cor:ler}, each of these
  supernodes $S\in{\cal S}_{|C_{ij}}$ can have at most $r$ paths that
  were built in a component containing $C_{ij}$, so it may contribute at
  most $r$ to the number of sets in ${\cal K}_{|C_{ij}}$, to a total of
  $r^2$. We must also add in the (at most) $r$ paths of the current
  working supernode, to obtain that $|{\cal K}_{|C_{ij}}|\le s$. For $u'=u$, we set $\threat_z=\{P_{ij}  \mid d_{G_{ij}}(P_{ij},z)\le(u+\gamma)\Delta\}$, and
  let $\tau=\E[|\threat_z|]$. With this we may apply
  \lemmaref{lem:threat} to infer that
  \[
  \tau\le 6e^{(2s+1)\cdot(u'+\gamma)/u}=6e^{(2s+1)\cdot(1+\gamma/u)}~.
\]
  Next, we show that our process is also a cutting skeleton-process, with the graph sequence $G_{ij}$ and
  the skeletons are the $P_{ij}$, ordered lexicographically. The
  parameters are the same as before: $l=0$, $u=1/4$ and $b=2s$ (this is
  the exact same process, after all). The condition that the graph
  sequence contains every uncovered point is trivial by definition of
  $G_{ij}$. By \lemmaref{lem:cut} we obtain that the probability that
  $B_z$ is cut is at most
  \begin{eqnarray}\label{eq:bb}
    \lefteqn{(1-e^{-2b\gamma/(u-l)})\left(1+\frac{\tau}{e^b-1}\right)}\nonumber\\&\le& (1-e^{-20r^2\gamma})\cdot(1+O(e^{10r^2\gamma}))\nonumber\\
    &=&O(\gamma r^2)~,
  \end{eqnarray}
  where the inequality uses that $s=r^2+r\le 5r^2/4$ (as $r\ge 2$), and the last equality uses that $\gamma\le 1/r^2$ and $e^x\approx 1+O(x)$ whenever $|x|\le 20$.
  In what follows we bound the probability of event $\cE_{\cone}$, which
  is the event that the ball $B_z$ is cut in the \texttt{Create-Cones}
  procedure, while conditioning that it was not cut while creating the $S_{ij}$. Let
  $S=S_{ij}$ be the set that contains $B_z$, which was built around the
  path $P=P_{ij}$.  Let $c_1,\dots,c_k$ be the centers chosen in
  \texttt{Create-Cones}($S$,$P$,$\cC$). We claim that there can be at most $9$ of them that may cut $B_z$. To see this, observe that each cone contains a ball of radius at least $\Delta/8$, and since $P$ is a shortest path, in any set of 10 centers there are two centers $c_g,c_h$ such that $d_S(c_g,c_h)\ge 9\Delta/8>2(1/2+\gamma)\Delta$. By the triangle inequality it must be that at least one of them is more than $(1/2+\gamma)\Delta$ away from $z$. Finally, by \lemmaref{lem:strong} any cone centered at $c$ may only contain points at distance at most $\Delta/2$ from $c$ (see
  \eqref{eq:dd}), so it may not be the first to cut $B_z$.
  As $\alpha$ is chosen uniformly from an interval of
  size $\Delta/8$, the probability that a ball of radius $\gamma\Delta$
  will be cut is at most $2\gamma\Delta/(\Delta/8)=16\gamma$. By a
  simple union bound,
  \[
  \Pr[\cE_{\cone}\mid B_z\subseteq S]< 144\gamma~,
  \]
  which is dominated by \eqref{eq:bb}, thus the final bound is
  \[
  \Pr[B_z\text{ is cut}]\le O(\gamma r^2)~.
  \]
\end{proof}

\section{Bounded Tree-width Graphs}\label{sec:tree}

In this section we prove the second part of \theoremref{thm:stong-list}, that any graph with tree-width at most $r$ admits an efficient $(O(r),O(1/r))$-padded strong-diameter partition scheme.

Since graphs of tree-width $r$ are $K_{r+2}$-minor-free, the result of \sectionref{sec:weak-diam} already implies a (weak diameter) probabilistic partition which is $O(r)$-padded. The purpose of this section is to show a {\em strong-diameter} $(O(r),O(1/r))$-padded partition for graphs of bounded tree-width.
We will use the same framework as the previous sections, and exploit the special structure of bounded tree-width graphs.

\begin{definition}
A graph $G=(V,E)$ has tree-width $r$ if there exists a collection of sets $I=\{X_1,\dots,X_k\}$ with each $X_i\subseteq V$, and a tree $T=(I,F)$, such that the following conditions hold:
\begin{itemize}
\item $\cup_{i\in[k]}X_i=V$,
\item For all $i\in[k]$, $|X_i|\le r+1$,
\item For all $\{u,v\}\in E$, there exists $i\in[k]$ such that $u,v\in X_i$,
\item For all $u\in V$, the tree nodes containing $u$ form a connected subtree of $T$.
\end{itemize}
\end{definition}

\begin{corollary}\label{cor:width}
Let $U$ be a bag in the tree-decomposition $T=(I,F)$ of $G=(V,E)$. Then if $U_1,U_2\in I$ lie in different connected components of $T\setminus \{U\}$, and $x_1\in U_1\setminus U$, $x_2\in U_2\setminus U$, then $x_1,x_2$ are in different connected components of $G\setminus U$.
\end{corollary}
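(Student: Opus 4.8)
This is the standard "a bag is a separator" argument for tree decompositions, specialized to the stated form. The plan is to argue by contradiction: suppose $x_1$ and $x_2$ lie in the same connected component of $G\setminus U$, so there is a path $x_1=y_0,y_1,\dots,y_m=x_2$ in $G\setminus U$, i.e., with every $y_k\notin U$.

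First I would set up the right bookkeeping. For a vertex $v\in V$ let $I_v=\{X\in I : v\in X\}$; by the last axiom of the treewidth definition, $I_v$ induces a connected subtree of $T$. The key observation is: if $v\notin U$, then $U\notin I_v$, and since $I_v$ is connected in $T$ while $T\setminus\{U\}$ is obtained by deleting the single node $U$, the subtree $I_v$ lies entirely within one connected component of $T\setminus\{U\}$; call this component $\phi(v)$. This $\phi(v)$ is well-defined for every vertex of the path (all $y_k\notin U$), and also for $x_1,x_2$. Moreover $x_1\in U_1\setminus U$ means $U_1\in I_{x_1}$, so $\phi(x_1)$ is the component of $T\setminus\{U\}$ containing $U_1$; likewise $\phi(x_2)$ is the component containing $U_2$. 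Since $U_1,U_2$ lie in different components of $T\setminus\{U\}$, we have $\phi(x_1)\neq\phi(x_2)$.

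Next I would show $\phi$ is constant along the path, which gives the contradiction. Take any edge $\{y_k,y_{k+1}\}$ of the path. By the third axiom there is a bag $X$ with $y_k,y_{k+1}\in X$, i.e., $X\in I_{y_k}\cap I_{y_{k+1}}$. Since $y_k,y_{k+1}\notin U$ we have $X\neq U$, so $X$ belongs to some component of $T\setminus\{U\}$; but $X\in I_{y_k}$ forces that component to be $\phi(y_k)$, and $X\in I_{y_{k+1}}$ forces it to be $\phi(y_{k+1})$, hence $\phi(y_k)=\phi(y_{k+1})$. Chaining over $k=0,\dots,m-1$ gives $\phi(x_1)=\phi(y_0)=\dots=\phi(y_m)=\phi(x_2)$, contradicting $\phi(x_1)\neq\phi(x_2)$. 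Therefore no such path exists, and $x_1,x_2$ lie in different connected components of $G\setminus U$.

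There is no real obstacle here; the only mildly delicate point is the "$\phi$ is well-defined" step, which is exactly where the connected-subtree axiom of the tree decomposition is used, together with the fact that deleting one node $U$ from a tree cannot split a subtree that avoids $U$. Everything else is a direct walk along the path using that every edge is covered by a bag.
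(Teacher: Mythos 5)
Your proof is correct: it is the standard argument that every bag of a tree decomposition is a separator, and the two key steps (the map $\phi$ is well-defined because the connected subtree $I_v$ avoids $U$ and hence survives intact in one component of $T\setminus\{U\}$, and $\phi$ is constant along edges because every edge of $G$ is covered by some bag) are exactly where the two relevant axioms are used. The paper states this corollary without proof, treating it as a known consequence of the treewidth definition, so there is nothing to compare against; your write-up supplies precisely the argument the authors are implicitly invoking.
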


\subsection{The Algorithm}

Let $G=(V,E)$ be a graph of tree-width $r-1$, and let $T$ be its tree-decomposition so that each bag has at most $r$ vertices, and $T$ has an arbitrary root $R$. The height of a tree node $U$, $h(U)$, is its distance in $T$ from the root $R$. For a vertex $u\in V$ let $h(v)$ denote the minimal height of a tree node $U$ containing $u$, and denote by $b(u)=U$ the node achieving this minimum. Order the vertices of the graph $(v_1,\dots v_n)$ such that for all $1\le i<j\le n$, $h(v_i)\le h(v_j)$. In the $i$-th iteration of the algorithm we will have a graph $G_i$ (initially $G_1=G$), and if $v_i\in G_i$ we shall create a cluster $S_i=B_{G_i}(v_i,R_i\Delta)$, where $R_i\propto\Texp_{[0,1/2]}(8r)$. Then set $G_{i+1}=G_i\setminus S_i$ and continue. If $v_i\notin G_i$ then we do nothing in this iteration.

\begin{algorithm}\label{alg:tree}
\caption{\texttt{Tree-width-Partition}($G$,$\Delta$,$r$)}
\begin{algorithmic}[1]
\STATE Set ${\cal S}=\emptyset$.
\STATE Let $G_1 \leftarrow G$.
\FOR {$i=1,\dots n$}
\IF {$v_i\in G_i$}
\STATE Let $R_i\propto\Texp_{[0,1/2]}(8r)$.
\STATE Let $S_i=B_{G_i}(v_i,R_i\Delta)$.
\STATE Add $S_i$ to ${\cal S}$.
\STATE Set $G_{i+1}\leftarrow G_i\setminus S_i$.
\ELSE
\STATE Set $G_{i+1}\leftarrow G_i$.
\ENDIF
\ENDFOR
\RETURN ${\cal S}$.
\end{algorithmic}
\end{algorithm}

\subsection{The Analysis}

Fix some $z\in V$, $\gamma=O(1/r)$ and $B_z=B_G(z,\gamma\Delta)$. Let $U=b(z)\in I$ be the tree node containing $z$ such that $h(z)=h(U)$. The first observation is that when analyzing the probability that $B_z$ is cut, we may restrict our attention to vertices $v\in V$ whose $b(v)$ lies on the path from $R$ to $U$ in $T$. The reason is that if $b(v_i)$ is not on this path, then if $C\in I$ is the least common ancestor of $U$ and $b(v_i)$ in $T$, we claim that $G_i$ does not contain any vertex from $C$. To see this, note that by the choice of ordering all vertices in $C$ appear before $v_i$, and thus either created a cluster or were removed from the graph. By \corollaryref{cor:width} $z$ and $v_i$ are in different component of $G_i$, so $S_i$ cannot be the first to cut $B_z$.

Consider then the process restricted to the vertices contained in bags on the path from $R$ to $U$ (we may assume w.l.o.g that these appear first in the ordering). For any $i\in[n]$, denote by $C_i$ the connected component in $G_i$ that contains $z$, and let ${\cal S}_{|C_i}=\{S_j~:~ S_j\sim C_i\}$.

\begin{claim}\label{claim:fef}
For any $i\in[n]$, $|{\cal S}_{|C_i}|\le 2r$.
\end{claim}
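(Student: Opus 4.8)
The plan is to argue that every supernode $S_j$ that neighbors the component $C_i$ containing $z$ was centered at a vertex $v_j$ whose defining bag $b(v_j)$ lies on the root-to-$U$ path in $T$, and that at most two such supernodes can arise ``per bag'' along this path. Since I have already restricted attention to vertices whose bags lie on the path from $R$ to $U$ (of which there are at most $h(U)+1$ bags, but that count is useless by itself), the real content is to show a \emph{local} bound: for each bag $X$ on this path, at most two of the supernodes in ${\cal S}_{|C_i}$ were centered at a vertex of $X$, and moreover only bags $X$ that actually separate $z$ from $C_i$'s ``outside'' can contribute. The cleanest route is to show that ${\cal S}_{|C_i}$ injects into a set of size $2r$ via the following mechanism: each $S_j \in {\cal S}_{|C_i}$ touches $C_i$, hence by Corollary~\ref{cor:width}-type reasoning must have its center $v_j$ in a bag $X_j$ that lies on the path from $R$ to $U$ and is an ancestor of (or equal to) $U$; pick for each such $S_j$ a witness vertex $w_j \in N(S_j) \cap C_i$, so $w_j \in V(G_i)$ and $w_j$ lies in a bag that is a descendant of $X_j$ (or in $X_j$ itself).

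First I would set up the key separation lemma carefully: if $v_j$ was chosen as a center before step $i$ (so $j<i$ in the restricted ordering) and $S_j \sim C_i$, then because $h(v_j) \le h(z)$ by the ordering, the bag $b(v_j)$ appears weakly above $U$ on the path; and since all vertices of $b(v_j)$ were processed before $v_i$ is reached (wait, they were processed before step $j$ actually — need $h$-monotonicity), the set $b(v_j) \setminus$(whatever is already removed) is ``thin'' and the only way $S_j$ can still touch the component of $z$ is if the path in $G$ from $w_j$ to $z$ avoids all of $b(v_j) \cap (\text{removed vertices})$. Here I would invoke Corollary~\ref{cor:width}: if the entire bag $b(v_j)$ had been removed from $G_i$ then $z$ (which sits below or in $b(v_j)$, when $b(v_j)$ is a strict ancestor of $U$) would be separated from everything outside the subtree rooted at $b(v_j)$; so the surviving part of $b(v_j)$ in $G_i$ must be nonempty, which charges $S_j$ to a surviving vertex of its own bag.

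Then I would run the counting. There are at most $r$ distinct bags among the $b(v_j)$ for $S_j \in {\cal S}_{|C_i}$? That is false in general (the path could be long), so the bound $2r$ must come from bag \emph{sizes}, not path length: each bag has $\le r$ vertices (treewidth $r-1$), and the claim is that the multiset $\{b(v_j)\}$ has the property that the $v_j$'s are distinct vertices, all lying in $U$ or in bags that still have a surviving representative — and I expect the right statement is that all relevant $v_j$ lie in $U \cup N[U]$-type neighborhood, giving at most $|U| + |N(U)|$-ish many, which with the size bound is $\le 2r$. The main obstacle, and the part I would spend the most care on, is pinning down exactly \emph{which} bag each center $v_j$ lives in and proving the ``two per separating bag'' / ``all within two bag-fulls'' statement rigorously — i.e., ruling out that a long chain of ancestor bags each contributes a fresh surviving supernode to ${\cal S}_{|C_i}$ simultaneously. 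I expect this is handled by observing that if $b(v_j)$ is a strict ancestor of $b(v_{j'})$, both on the path, and both survive into $G_i$, then the surviving vertex of the \emph{lower} bag $b(v_{j'})$ already blocks $z$'s access to the subtree above unless $b(v_{j'})$'s surviving part is adjacent to $C_i$ in a very restricted way — ultimately forcing all contributing centers to lie in at most two consecutive bags along the path, namely $U$ and its parent, hence at most $2(r) = 2r$ of them after accounting for bag size $\le r+1$ minus overlaps. I would then double-check the off-by-one against the treewidth convention ($r-1$ here, so bags have size $\le r$) to land exactly on $2r$.
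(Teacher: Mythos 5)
There is a genuine gap here: you correctly sense that the bound must come from bag sizes rather than path length, and that \corollaryref{cor:width} is the right separation tool, but the charging scheme you propose does not work and the paper's actual mechanism is missing. Your suggestion that all contributing centers lie in $U$ and its parent bag (or ``within two bag-fulls'' near $U$) is false: a supernode $S_h=B_{G_h}(v_h,R_h\Delta)$ is a ball of radius up to $\Delta/2$, so it can be centered at a vertex whose bag sits far up the root-to-$U$ path and still reach down to touch $C_i$. Likewise, charging each surviving $S_h$ to a surviving vertex of \emph{its own} bag $b(v_h)$ gives no bound, since arbitrarily many distinct bags along the path can each retain a surviving vertex simultaneously.

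The paper's argument instead introduces checkpoint times: writing $R=U_1,\dots,U_k=U$ for the bags on the root-to-$U$ path, let $i_j$ be the first time at which $U_j$ is entirely removed from the graph. At time $i_j$, any supernode $S_h$ still adjacent to the component of $z$ must \emph{intersect} $U_j$ --- if it missed $U_j$, then since $v_h$ lies in some $U_{j'}$ with $j'\le j$ (by the height ordering and minimality of $i_j$) and $U_j$ is gone, \corollaryref{cor:width} separates $S_h$ from $z$. Since the supernodes are disjoint, at most $|U_j|\le r$ of them intersect $U_j$, so $|\mathcal{S}_{|C_{i_j}}|\le r$ at each checkpoint. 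Between consecutive checkpoints $i_j$ and $i_{j+1}$ at most $r$ further supernodes are created (each bag is emptied after at most $r$ clusters are formed), and adding the two counts yields $2r$ at every intermediate time $i$. This ``$\le r$ survivors at the last checkpoint plus $\le r$ created since'' decomposition is the essential idea your proposal lacks, and without it the step you flag as ``the main obstacle'' cannot be closed along the route you describe.
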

\begin{proof}
Let $R=U_1,\dots,U_k=U$ be the sequence of bags from the root to $U$ in the tree-decomposition. For any $j\in[k]$, let $i_j\in[n]$ be the minimal such that $U_j\cap V(G_{i_j})=\emptyset$. We prove that $|{\cal S}_{|C_{i_j}}|\le r$
, by noting that there are at most $r$ supernodes that can intersect $U_j$ (as $|U_j|\le r$). If a supernode $S_h$ does not intersect $U_j$, then since this supernode is not centered at some vertex of $U_{j'}$ for $j'>j$ (using the ordering and the minimality of $i_j$), then by \corollaryref{cor:width} there is no path from $z$ to $N(S_h)$ in $G_{i_j}$. Since there are at most $r$ new supernodes created between time $i_j$ to $i_{j+1}$ (as each bag is covered after at most $r$ clusters are formed), the claim follows.
\end{proof}

Observe that the algorithm generates a threatening skeleton-process with the sequence $G_1,\dots$, the skeletons are $A_i=\{v_i\}$, $K_i=S_i$, $l=0$, $u=1/2$, $s=2r$ and $b=4r$. Let $u'=u$ and $\threate_z=\{v_i \mid d_{G_i}(z,v_i)\le (u+\gamma)\Delta\}$. By \claimref{claim:fef} we may apply \lemmaref{lem:threat} and obtain that
\begin{equation}
\tau\le 6e^{(4r+1)\cdot(1+\gamma/u)}~.
\end{equation}
Finally, as our process can also be made to be a cutting skeleton-process, as long as we omit the steps in which $v_i\notin G_i$ (note that the next $i$ for which $v_i\in G_i$ may depend on previous random choices of $R_j$ for $j<i$, but this is allowed), and with $l=0$, $u=1/2$ and $b=4r$.
Applying \lemmaref{lem:cut}, we obtain that the probability that $B_z$ is cut is at most
\[
(1-e^{-2b\gamma})\left(1+\frac{\tau}{e^b-1}\right)\le (1-e^{-8r\gamma})\cdot O(e^{8r\gamma})
    =O(\gamma r),
\]
using that $\gamma\le 1/r$.

\subsection{Bounded Pathwidth Graphs}

A graph has path-width $r$ if it has a tree decomposition of width $r$ such that the tree is a path.
The following result was communicated to us by James R. Lee and Anastasios Sidiropoulos.
\begin{itemize}
\item Any graph $G$ on $n$ vertices and path-width $r$ admits an efficient $O(\log r)$-padded strong-diameter partition scheme.
\end{itemize}

We provide a sketch of the proof. First decompose the graph into shortest paths as follows: as long as the graph is not empty, in each connected component, pick a shortest path between a vertex in the first bag to a vertex in the last bag. Remove this path from the graph, and continue on the connected components that remain. Since any such path must use some vertex in every bag, it follows that the path-width decreases by at least 1 in each iteration. We thus obtain a cop-decomposition of width 1 and depth $r$. We now apply our method, and the number of threateners is only $O(r)$, which implies the result

Note that every graph $G$ on $n$ vertices and tree-width $r$ has path-width at most $O(r\log n)$ (this follows because it has a tree-decomposition of depth $O(\log n)$, see e.g. \cite{GTW13}). An immediate corollary is a $O(\log r+\log\log n)$-padded strong-diameter partition for graphs of tree-width $r$.

\section{Bounded Euler-Genus Graphs}\label{sec:genus}

In this section we prove the third part of
\theoremref{thm:stong-list}, that any graph with Euler-genus at most
$g$ admits an efficient $O(\log g)$-padded strong-diameter partition
scheme. We assume here that the graph $G$ is embedded without any edge
crossing on some closed surface $\Sigma$ (compact and without
boundary), which can be orientable or non-orientable, of Euler
characteristic $2-g$.

The Euler characteristic is the value $\chi(\Sigma) = n - e + f$ where
$n,e,f$ are respectively the number of nodes, edges, and faces of the
embedding of $G$ on $\Sigma$. If $\Sigma$ is orientable then $g$ must
be even and $\Sigma$ homeomorphic to a sphere with $g/2$
``handles''. And if $\Sigma$ is non-orientable, then it is
homeomorphic to a sphere with $g$ ``cross-caps''. The Euler-genus of
$G$ is the smallest $g$ such that it can be embedded on a surface of
Euler characteristic $2-g$. So, it generalizes the classical notion of
genus of a graph (for orientable surfaces) and the non-orientable
genus of a graph. Planar graphs have Euler-genus~$0$.

Using the Fundamental Cycle Method based on BFS trees
(see~\cite[Lemma~4.2.4 and Theorem 4.3.2]{MT01}), we have the
following lemma (see also~\cite{IndykS07,Colin10}):

\begin{lemma}\label{lem:genus}
  If $G$ is a Euler-genus $g$ graph, there exists a cycle $A$
  comprised of two shortest paths emanating at a common root, such
  that $G\setminus A$ has Euler-genus at most $g-1$ (this is at most
  $g-2$ if $A$ is two-sided).
\end{lemma}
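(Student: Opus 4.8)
\textbf{Proof proposal for Lemma~\ref{lem:genus}.}

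The plan is to use a shortest-path tree together with a homology argument to locate a non-separating cycle, and then to quote the standard fact that cutting a surface along a non-separating simple closed curve drops the genus by one. First I would reduce to the case where $G$ is connected and has genus \emph{exactly} $g$ with $g \ge 1$: if $G$ already embeds in a surface of genus $\le g-1$ then any fundamental cycle of a shortest-path tree works (and if $G$ is a forest the statement is vacuous for our purposes, since there is then nothing to recurse on), and if $G$ is disconnected then by additivity of genus over connected components some component $G'$ has genus $\ge 1$ and it suffices to find the cycle inside $G'$. Assuming now $G$ connected of genus exactly $g\ge 1$, fix a minimum-genus embedding of $G$ into the orientable surface $\Sigma_g$; by a theorem of Youngs such an embedding is cellular (every face is an open disk). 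Pick an arbitrary root $r \in V$ and build a shortest-path tree $T$ of $G$ rooted at $r$.

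The key step is to show that some non-tree edge $e=\{u,v\}$ has the property that its fundamental cycle $C_e$ (the unique $u$--$v$ path in $T$, together with $e$) is non-separating in $\Sigma_g$. Work over $\mathbb{Z}_2$. The fundamental cycles $\{C_e : e \notin T\}$ form a basis of the cycle space of $G$, i.e.\ of $H_1(G;\mathbb{Z}_2)$. Because the embedding is cellular, the natural map $\phi \colon H_1(G;\mathbb{Z}_2) \to H_1(\Sigma_g;\mathbb{Z}_2) \cong \mathbb{Z}_2^{2g}$ induced by inclusion is surjective. Since $g\ge 1$ we have $H_1(\Sigma_g;\mathbb{Z}_2)\neq 0$, so $\phi$ is not the zero map, and as the $C_e$ span $H_1(G;\mathbb{Z}_2)$ there must be a non-tree edge $e$ with $\phi(C_e)\neq 0$. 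A simple closed curve on an orientable surface whose $\mathbb{Z}_2$-homology class is nonzero is non-separating, so $A:=C_e$ is a non-separating simple cycle of $G$. Moreover, letting $w=\mathrm{lca}(u,v)$ in $T$, the cycle $A$ is exactly the union of the $T$-path from $w$ to $u$, the edge $e$, and the $T$-path from $v$ to $w$; each $T$-path is a subpath of a shortest path from $r$, hence a shortest path in $G$. Thus $A$ is a cycle formed by two shortest paths emanating from the common root $w$ (closed up by the single edge $e$), as required.

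To finish, cut $\Sigma_g$ along the non-separating simple closed curve $A$ and cap each of the two resulting boundary circles with a disk. The cut surface is connected (non-separating) and its Euler characteristic increases by $2$ under this operation, giving a closed orientable connected surface with $\chi = 2-2(g-1)$, i.e.\ $\Sigma_{g-1}$. Every edge of $G$ not belonging to $A$ is disjoint from the edges of $A$ and, being embedded, does not cross $A$; hence $G\setminus A$ lies entirely in $\Sigma_g \setminus A$, which is a subsurface of $\Sigma_{g-1}$. Therefore $G\setminus A$ embeds in $\Sigma_{g-1}$ and has genus at most $g-1$.

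I expect the main obstacle to be the second paragraph: cleanly justifying that a \emph{fundamental} cycle (rather than some arbitrary cycle) can be taken non-separating. This hinges on the surjectivity of $H_1(G;\mathbb{Z}_2)\to H_1(\Sigma_g;\mathbb{Z}_2)$ for cellular embeddings, together with Youngs' theorem that a minimum-genus embedding of a connected graph is cellular; handling the non-cellular, disconnected, and forest degeneracies in the reduction is routine but must be stated carefully so that the shortest-path structure of $A$ is preserved. Everything else—the basis property of fundamental cycles and the Euler-characteristic bookkeeping for surgery along a non-separating curve—is standard.
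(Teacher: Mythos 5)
The paper gives no proof of this lemma at all---it is invoked as folklore with a pointer to \cite{IndykS07}---and your argument is a correct and complete writeup of exactly the standard argument behind that folklore: take a cellular minimum-genus embedding (Youngs), use surjectivity of $H_1(G;\mathbb{Z}_2)\to H_1(\Sigma_g;\mathbb{Z}_2)$ to find a fundamental cycle of a shortest-path tree that is non-separating, and perform surgery along it to drop the genus by one. The only cosmetic mismatch is that your $A$ consists of two shortest paths from the lca \emph{plus the closing non-tree edge} $e$, rather than literally two shortest paths; this is how the lemma is always meant and it is harmless for its use in \lemmaref{lem:genus-cut-prob}, which only needs that all but $O(1)$ of the cone centers lie on one of the two shortest paths.
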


This fits nicely in the bounded threateners program: Our algorithm
will iteratively take such a cycle $A$, create a random buffer $S$
around it, and recurse on the connected components of $G\setminus
S$. The base case is when the component is planar, then we may apply
our strong-diameter padding algorithm. Formally, in iteration $i$ take
a connected component $C_i$ in $G_i$, if $C_i$ is not planar, find a
cycle $A_i$ as in \lemmaref{lem:genus}. Let
$S_i=B_{G_i}(A_i,R_i\Delta)$ where
$R_i\propto\Texp_{[0,1/4]}(8\log g)$, set $G_{i+1}=G_i\setminus
S_i$. Each $S_i$ is partitioned to clusters by iteratively taking
cones centered at some of the points of $A_i$. If $C_i$ is planar,
invoke the decomposition scheme of \sectionref{sec:strong-diam}.

\begin{algorithm}\label{alg:genus}
\caption{\texttt{Genus-Partition}($G$,$\Delta$,$g$)}
\begin{algorithmic}[1]
\STATE Let $G_0 \leftarrow G$, $i=0$.
\STATE Let ${\cal C}\leftarrow \emptyset$.
\WHILE {$G_i$ is non-empty}
\STATE Let $C_i$ be a connected component of $G_i$.
\IF {$C_i$ is planar}
\STATE Let $P_i$ be a partition obtained by invoking \texttt{Strong-Random-Partition}($C_i,\Delta,5$). Add the clusters of $P_i$ to ${\cal C}$.
\STATE Set $G_{i+1}\leftarrow G_i\setminus \cup P_i$.
\ELSE
\STATE Let $A_i$ be cycle as in \lemmaref{lem:genus}.
\STATE Let $R_i\propto\Texp_{[0,1/4]}(8\log g)$.
\STATE Let $S_i=B_{G_i}(A_i,R_i\Delta)$.
\STATE \texttt{Create-Cones}($S_i,A_i,{\cal C}$). Add the resulting clusters to $\cC$.
\STATE Set $G_{i+1}\leftarrow G_i\setminus S_i$.
\ENDIF
\STATE $i\leftarrow i+1$.
\ENDWHILE
\RETURN $\cC$.
\end{algorithmic}
\end{algorithm}

We now turn to analyzing the algorithm. The fact that the resulting partition is strong-diameter $\Delta$-bounded follows from the fact that \texttt{Strong-Random-Partition} generates strong-diameter $\Delta$-bounded clusters, and by \lemmaref{lem:strong}, the cones are also strong-diameter $\Delta$-bounded (the proof of that lemma never used that $P$ is a shortest path, we only need that any point in $S_i$ is within distance $\Delta/4$ from $A_i$).

Fix some $z\in V$, $\gamma\le \delta$ for sufficiently small constant
$\delta$ (which is independent of $g$), and set $B_z = B_G(z,\gamma\Delta)$.
\begin{lemma}
  \label{lem:genus-cut-prob}
  The probability that the ball $B_z$ is cut by the above process is
  \[ \Pr[ B_z \mbox{ is cut }] \leq 1 - e^{-O(\gamma\log g)}~. \]

\end{lemma}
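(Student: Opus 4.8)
The plan is to mirror the structure of the proofs of \lemmaref{lem:weak-cut-prob} and \lemmaref{lem:strong-cut-prob}: realize the genus algorithm as a threatening skeleton-process (to bound the expected number of threateners via \lemmaref{lem:threat}), then as a cutting skeleton-process (to bound the cut probability via \lemmaref{lem:cut}), and finally handle the planar base case separately using the bound already proven for \texttt{Strong-Random-Partition}. First I would check that the iterations of \texttt{Genus-Partition} in which $C_i$ is non-planar form a threatening skeleton-process with $A_i$ the cycle from \lemmaref{lem:genus}, $K_i=S_i$, $l=0$, $u=1/4$, and $b=2s$ with $s=\Theta(\log g)$. The key structural input is that each such $A_i$ strictly decreases the genus of the component containing $z$ (by \lemmaref{lem:genus}), so along any branch of the recursion that reaches $z$ there are at most $g$ non-planar cycles ever built; hence for every connected component $C_i$ of a $G_i$ we have $|{\cal K}_{|C_i}|\le g$. (Actually one should set $s=\max\{g,5\}$ or rather use that $b=8\log g$ forces $s=4\log g$; the calibration is that the rate $8\log g$ in the algorithm must equal $2s/(u-l)=8s$, giving $s=\log g$, and then $|{\cal K}_{|C_i}|\le g$ is crudely bounded but we only ever need $g\le 2^{s}$-type estimates, so I would be slightly more careful and instead argue that only $O(\log g)$ skeletons can be \emph{simultaneously} visible to $z$'s component — this is the genuine content and needs the nesting/genus-drop argument, not just the total count.)

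The main obstacle is exactly this last point: unlike the $K_r$-free case, where \invref{inv:sim1} gives a clean $K_r$-minor obstruction bounding $|{\cal S}_{|C_i}|$, here I need to argue that at any moment the component containing $z$ borders at most $s=O(\log g)$ of the buffers $S_i$. The argument should be that each buffer $S_i$ corresponds to a cycle that dropped the genus by one \emph{in the component it was carved from}, and the components form a laminar (recursion-tree) structure; a component $C$ of some $G_i$ can only be adjacent to buffers built at ancestors of $C$ in this recursion tree, and along any root-to-$C$ path the genus strictly decreases, so there are at most $g$ such ancestors. To get $O(\log g)$ rather than $g$ I would either (i) accept $s=g$ and note the algorithm's rate should then be read as matching $b=2s$, or (ii) — matching the claimed $O(\log g)$ padding — invoke a sharper version where the buffers adjacent to $C$ are themselves laminar and genus-additive so their count is $O(\log g)$; I would cite/reprove a folklore genus-additivity fact. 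I'll write it for whichever bound the rate in Algorithm~\ref{alg:genus} actually calibrates to, namely $s$ with $8\log g = 2s/(u-l) = 8s$ forcing a reading $s=\log g$, and correspondingly argue $|{\cal K}_{|C_i}|\le s$.

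Granting that, the rest is routine. Applying \lemmaref{lem:threat} with $\gamma\le\delta$ a small constant gives $\tau:=\E[|\threat_z|]\le 3e^{(2s+1)(1+\gamma/u)}\le 3e^{(2\log g+1)(1+4\gamma)} = g^{O(1)}$. The same process is a cutting skeleton-process (the $G_{i+1}\supseteq G_0\setminus\bigcup_{j\le i}K_j$ condition holds since we only delete $S_i$ or, in the planar case, a partition that covers the component), so \lemmaref{lem:cut} with $b=2s=2\log g$ and $l=0,u=1/4$ gives that the probability $B_z$ is cut by some non-planar-iteration buffer is at most $(1-e^{-8\gamma\log g})\bigl(1+\tau/(e^{2\log g}-1)\bigr) = (1-e^{-8\gamma\log g})(1+g^{O(1)}/(g^2-1))$, which for $g$ large and $\gamma$ a sufficiently small constant is $1-e^{-O(\gamma\log g)}$. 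It remains to add the probability that $B_z$ is cut inside the \texttt{Create-Cones} step of a non-planar iteration (bounded by $O(\gamma)$ exactly as in \lemmaref{lem:strong-cut-prob}, since each cone has diameter $\le\Delta$ and the threshold $\alpha$ is uniform in an interval of width $\Delta/8$, and at most $O(1)$ cone-centers can threaten $B_z$ because $A_i$ is a union of two shortest paths), plus the probability it is cut by the planar base case, which by \lemmaref{lem:strong-cut-prob} applied with $r=5$ is $O(\gamma)$. All three contributions combine to $1-e^{-O(\gamma\log g)}$ after absorbing constants, completing the proof.
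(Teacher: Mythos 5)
There is a genuine gap, and it sits exactly where you flagged your ``main obstacle.'' Your plan routes the argument through \lemmaref{lem:threat}, which requires showing that every component $C_i$ satisfies $|{\cal K}_{|C_i}|\le s$ for $s=O(\log g)$ (the calibration $8\log g=2s/(u-l)$ with $u=1/4$ forces $s=\log g$). The genus-drop argument you sketch only bounds the number of buffers built at ancestors of $C_i$ in the recursion tree by $g$, not $\log g$; the ``sharper laminar/genus-additivity'' claim you propose to ``cite or reprove'' is not established anywhere and is not something the paper proves either. Taking $s=g$ instead does not rescue the plan: the algorithm's buffer rate is $8\log g$, not $8g$, so the hypotheses of \lemmaref{lem:threat} simply do not match, and recalibrating the algorithm to rate $8g$ would destroy the claimed $O(\log g)$ padding.

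The paper's proof avoids \lemmaref{lem:threat} entirely. The observation is that \lemmaref{lem:cut} only needs \emph{some} bound on $\tau=\E[|\threat_z|]$, and here one has the deterministic bound $|\threat_z|\le g$: by \lemmaref{lem:genus} each non-planar iteration on the component containing $z$ strictly decreases its genus, so over the whole run at most $g$ cycles $A_i$ are ever built in components containing $z$. Plugging $\tau\le g$, $b=2\log g$ into \lemmaref{lem:cut} gives $\Pr[{\cal E}_{\text{genus}}]\le(1-e^{-8\gamma\log g})\bigl(1+g/(e^{2\log g}-1)\bigr)\le 1-e^{-16\gamma\log g}$, since $g/(g^2-1)$ is already small --- no bound on the number of \emph{simultaneously} visible buffers is needed. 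Your treatment of the cone and planar events is essentially right (the paper conditions on the disjoint events ${\cal F}_{\text{cone}}$ and ${\cal F}_{\text{planar}}$ and combines the three bounds multiplicatively as $1-p\,e^{-C\gamma}$, which is slightly cleaner than a raw union bound but not the crux). To repair your proof, replace the entire threatening-process step with the deterministic count $|\threat_z|\le g$ and feed it directly into \lemmaref{lem:cut}.
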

\begin{proof}
Let ${\cal E}_{\text{genus}}$ be the event that $B_z$ is first cut by some set $S_i$. Divide the event $\neg{\cal E}_{\text{genus}}$ into ${\cal F}_{\text{cone}}=\{\exists i,~B_z\subseteq S_i\}$ and ${\cal F}_{\text{planar}}=\{\exists i,~B_z\subseteq C_i\wedge C_i\text{ is planar}\}$. Let ${\cal E}_{\text{cone}}$ be the event that ${\cal F}_{\text{cone}}$ holds and also $B_z$ is first cut by a cone in the \texttt{Create-Cones}($S_i,A_i,\cC$), and finally let ${\cal E}_{\text{planar}}$ be the event that ${\cal F}_{\text{planar}}$ holds and also $B_z$ is cut while calling \texttt{Strong-Random-Partition} on a planar component containing $B_z$. We will bound each of the ${\cal E}$ events separately.

Assume w.l.o.g that non-planar components are chosen first, then the process until time $T$ (where all components are planar) is a cutting skeleton-process, with the graph sequence $G_1,\dots$, the skeletons $A_i$ and $K_i=S_i$, the parameters are $l=0$, $u=1/4$ and $b=2\log g$. Let $\threat_z=\{A_i  ~:~ i\in[T],~d_{G_i}(A_i,z)\le(1/4+\gamma)\Delta\}$. Note that by \lemmaref{lem:genus} there can be at most $g$ iterations (on components containing $z$) in which $z$ lies in a non-planar component, so $|\threat_z|\le g$. By \lemmaref{lem:cut}
\[
\Pr[{\cal E}_{\text{genus}}]\le (1-e^{-16\gamma\log g})\cdot(1+g/(e^{2\log g}-1))\le 1-e^{-32\gamma\log g}~,
\]
using that $\gamma\le 1/32$.
If $\Pr[\neg{\cal E}_{\text{genus}}]=p$, then $p\ge e^{-32\gamma\log g}$ and if $p_{\text{cone}}=\Pr[{\cal F}_{\text{cone}}]$ and $p_{\text{planar}}=\Pr[{\cal F}_{\text{planar}}]$ then since the events ${\cal F}_{\text{cone}}$, ${\cal F}_{\text{planar}}$ are disjoint, we have that
\begin{equation}\label{eq:p}
p=p_{\text{cone}}+p_{\text{planar}}~.
\end{equation}
By the first assertion of \theoremref{thm:stong-list}, there is a large constant $C$ such that
\[
\Pr[{\cal E}_{\text{planar}}]= p_{\text{planar}}\cdot O(\gamma)=p_{\text{planar}}(1-e^{-C\gamma})~,
\]
since $\gamma$ is sufficiently small.

Finally, we bound the probability of event ${\cal E}_{\text{cone}}$. Conditioning on $B_z\subseteq S_i$ for some $i$, we use a similar argument as in the proof of \lemmaref{lem:strong-cut-prob}, here we claim that there can be at most $18$ centers whose cone may intersect $B_z$. This is because if there are more, at least $10$ of them lie on one of the two  shortest path $A_i$ is comprised of, and using the argument appearing in the proof of \lemmaref{lem:strong-cut-prob}, it cannot be that all of them threaten $B_z$. Since $\alpha$ is chosen uniformly from an interval of length $\Delta/8$, the probability that any cone cuts $B_z$ is at most $2\gamma\Delta/(\Delta/8)$, thus by a union bound, using that $C$ is large enough,
\[
\Pr[{\cal E}_{\text{cone}}]= p_{\text{cone}}\cdot O(\gamma)=p_{\text{cone}}(1-e^{-C\gamma})~.
\]

Combining the three bounds, we obtain that the probability that $B_z$ is cut is at most

\begin{eqnarray*}
\Pr[{\cal E}_{\text{genus}}]+\Pr[{\cal E}_{\text{cone}}]+\Pr[{\cal E}_{\text{planar}}]&\le&1-p+p_{\text{cone}}(1-e^{-C\gamma})+p_{\text{planar}}(1-e^{-C\gamma})\\
&\stackrel{\eqref{eq:p}}{=}&1-p\cdot e^{-C\gamma}\\
&\le&1-e^{-32\gamma\log g}\cdot e^{-C\gamma}\\
&=&1-e^{-O(\gamma\log g)}~.
\end{eqnarray*}
\end{proof}

\section{Further Directions}

A clear open problem is to improve the $O(r)$-padded partition scheme for $K_r$-minor-free graphs to the optimal $O(\log r)$. A first step might be proving such a result for graphs of tree-width $r$ (recall that such graphs have a strong-diameter $O(\log r+\log\log n)$-padded partition).

\subsection*{Acknowledgments}

We are grateful to Alex Andoni and Daniel Berend for fruitful discussions. A.~Gupta and C. Gavoille thank Microsoft Research SVC for their kind hospitality. We also thank Arnold Filtser for pointing out an error in the 
original proof of Lemma~\ref{lem:threat}.

\bibliographystyle{alpha}
\bibliography{bib}

\end{document}